\newcommand{\R}{\mathbb R}
\DeclareMathOperator{\pr}{pr}
\DeclareMathOperator{\ch}{conv}
\newcommand{\etal}{et al.}
\newcommand{\eps}{\varepsilon}
\newcommand{\eqdef}{:=}
\newtheorem{thm}{Theorem}[section]
\newtheorem{lem}[thm]{Lemma}
\newtheorem{observation}[thm]{Observation}
\newtheorem{cor}[thm]{Corollary}
\title{Approximating Tverberg Points in Linear Time for Any Fixed 
Dimension\thanks{A preliminary version appeared as
W. Mulzer and D. Werner, \emph{Approximating Tverberg Points in 
Linear Time for Any Fixed Dimension} in Proc.~28th SoCG, pp.~303--310, 2012.}}
\author{
Wolfgang Mulzer\inst{1} 
\and 
Daniel Werner\inst{2}\thanks{Funded by Deutsche Forschungsgemeinschaft 
within the Research Training
Group (Graduiertenkolleg) ``Methods for Discrete Structures''.} 
} 
\institute{
Institut f{\"ur} Informatik, Freie Universit{\"a}t Berlin, Germany,
\email{mulzer@inf.fu-berlin.de}, \url{http://page.mi.fu-berlin.de/mulzer} 
\and
Institut f{\"ur} Informatik, Freie Universit{\"a}t Berlin, Germany,
\email{dwerner@mi.fu-berlin.de}, \url{http://page.mi.fu-berlin.de/dawerner} 
}
\begin{document}

\maketitle

\begin{abstract}
Let $P \subseteq \R^d$ be a $d$-dimensional $n$-point set.
A \emph{Tverberg partition} is a partition of
$P$ into $r$ sets $P_1, \dots, P_r$ such
that the convex hulls $\ch(P_1), \dots, \ch(P_r)$ have non-empty intersection.
A point in $\bigcap_{i=1}^{r} \ch(P_i)$ is
called a \emph{Tverberg point} of depth $r$
for $P$. A classic result by Tverberg shows
that there always exists a Tverberg partition
of size $\lceil n/(d+1) \rceil$, but it is not
known how to find such a partition in polynomial time.
Therefore, approximate solutions are of interest.

We describe a deterministic algorithm that finds
a Tverberg partition of size $\lceil n/4(d+1)^3 \rceil$ in time
$d^{O(\log d)} n$. This means that for every fixed dimension
we can compute an approximate Tverberg point 
(and hence also an approximate \emph{centerpoint})
in \emph{linear} time. Our algorithm is obtained by
combining a novel lifting approach with a recent 
result by Miller and Sheehy~\cite{MillerSh10}.

\textbf{Keywords.} discrete geometry, Tverberg
theorem, centerpoint, approximation, high dimension

\end{abstract}

\section{Introduction}\label{sec:introduction}

In many applications (such as statistical analysis
or finding sparse geometric separators in meshes) we would like to have a way 
to generalize the one-dimensional notion of a median to higher dimensions.
A natural way to do this uses the notion of \emph{halfspace depth}
(or \emph{Tukey depth}). 
\begin{definition} 
Let $P$ be a finite set of points in $\R^d$, and let $c \in \R^d$ be a point
(not necessarily in $P$). The \emph{halfspace depth of 
$c$ with respect to $P$} is 
\[ 
  \min_{\textnormal{halfspace } h,\, c \in h} |h \cap P|.
\]
The \emph{halfspace depth of $P$} is the maximum halfspace depth 
that any point $c \in \R^d$ can achieve.
\end{definition}

A classic result in discrete
geometry, the Centerpoint Theorem, claims that for every $d$-dimensional point set 
$P$ with $n$ points there exists
a \emph{centerpoint}, i.e., a point $c \in \R^d$ with halfspace depth
at least $n/(d+1)$~\cite{DanzerGrKl63,Rado46}. There are point sets where
this bound cannot be improved.

However, if we actually want to compute a centerpoint for
a given point set efficiently, the situation becomes more involved.
For $d=2$, a centerpoint
can be found deterministically in linear time~\cite{JadhavMu94}.
For general $d$, we can compute a centerpoint in 
$O(n^d)$ time using linear programming, since
Helly's theorem implies that the set of all centerpoints can be described as
the intersection
of $O(n^d)$ halfspaces~\cite{Edelsbrunner87}. 
Chan~\cite{Chan04} shows how to improve this running time to
$O(n^{d-1})$ with the help of randomization. He actually solves the 
apparently harder problem
of finding a point with maximum halfspace depth.
If the dimension is not fixed, a result by Teng 
shows that it is coNP-hard to check whether a given point is a 
centerpoint~\cite{Teng92}. 

However, as $d$ grows, a running time of $n^{\Omega(d)}$ is not feasible.
Hence, it makes sense to look for faster approximate solutions.
A classic approach uses $\eps$-approximations~\cite{Chazelle00}:
in order to obtain a point of halfspace depth $n(1/(d+1) - \eps)$, take a random 
sample $A \subseteq P$ of size $O((d/\eps^2) \log (d/\eps))$ and
compute a centerpoint for $A$ via linear-programming. This gives
the desired approximation with constant probability, and the 
running time after the sampling step is constant for fixed $d$. What more could we
possibly wish for? For one, the algorithm is Monte-Carlo:
with a certain probability, the reported point fails to be a centerpoint,
and we know of no fast algorithm to check its validity. This problem can
be solved by constructing the $\eps$-approximation 
deterministically~\cite{Chazelle00}, at
the expense of a more complicated algorithm. Nonetheless, in either case
the resulting running time grows exponentially 
with $d$, an undesirable feature for large dimensions.

This situation motivated Clarkson \etal~\cite{ClarksonEpMiStTe96} to
look for more efficient randomized algorithms for approximate centerpoints.
They give a simple probabilistic algorithm that computes a point of halfspace depth
$\Omega(n/(d+1)^2)$ in time $O(d^2(d \log n + \log(1/\delta))^{\log(d+2)})$, 
where $\delta$ is the error probability.
They also describe a more sophisticated algorithm
that finds such a point in time polynomial in $n$, $d$, and $\log(1/\delta)$.
Both algorithms are based on a repeated algorithmic application of Radon's 
theorem (see below).
Unfortunately, there remains a probability of $\delta$ that the result
is not correct, and we do not know how to detect failure efficiently.

\begin{figure}[t]
   \begin{center} 
     \includegraphics{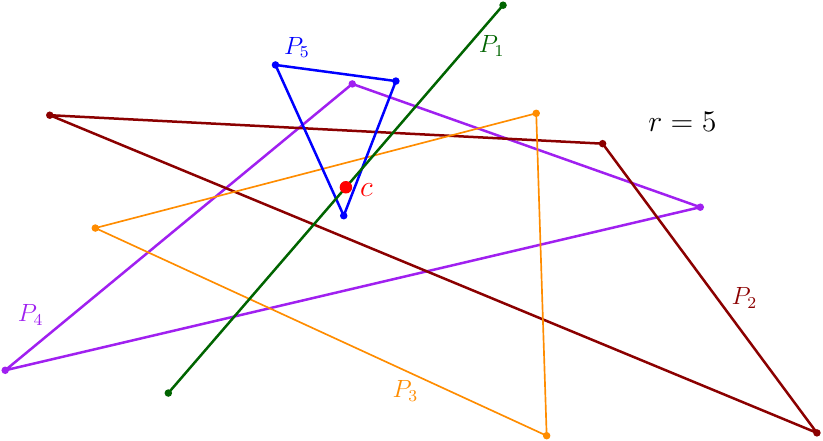}  
   \end{center}
     \caption{The point $c$ has Tverberg depth $r = 5$.}        
     \label{fig:TverbergPoint} 
\end{figure}

Thus, more than ten years later, Miller and Sheehy~\cite{MillerSh10} launched
a new attack on the problem. Their goal was to develop a deterministic
algorithm for approximating centerpoints whose running time is subexponential
in the dimension. For this, they use a different proof of the centerpoint
theorem that is based on a result by Tverberg: any $d$-dimensional 
$n$-point set can be partitioned into $r = \lceil n/(d+1) \rceil$ sets
$P_1, \dots, P_r$ such that the convex hulls $\ch(P_1), \dots, \ch(P_r)$ 
have nonempty intersection. Such a partition is called a \emph{Tverberg partition}
of $P$. By convexity, any point in $\bigcap_{i=1}^{r} \ch(P_i)$ must
be a centerpoint. 

More generally, we say that a point $c \in \R^d$ has \emph{Tverberg depth} 
$r'$ with respect to $P$ if there is a  partition of $P$ 
into $r'$ sets  such that  $c$ lies in the convex 
hull of each set. We also call $c$ an
\emph{approximate Tverberg point} (of depth $r'$); 
see Figure~\ref{fig:TverbergPoint}.

Miller and Sheehy describe how to find $\lceil n/2(d+1)^2\rceil$ disjoint subsets 
of $P$ and a point 
$c \in \R^d$ such that each subset contains $d+1$ points and has
$c$ in its convex hull. Hence, $c$ constitutes an approximate Tverberg point 
for $P$ (and thus also an approximate centerpoint), 
and the subsets provide a certificate for this fact. The algorithm is 
deterministic and runs in time $n^{O(\log d)}$. At the same time, it is 
the first algorithm that also finds an approximate Tverberg 
partition of $P$.  The running time is subexponential in $d$, but it 
is still the case that $n$ is raised to a power that depends on $d$, so
the parameters $n$ and $d$ are not separated in the running time. 

In this paper, we show that the running time for finding approximate 
Tverberg partitions (and hence approximate centerpoints) can be improved.
In particular, we show how to find a Tverberg partition with
$\lceil n/4(d+1)^3 \rceil$ sets in deterministic 
time $d^{O(\log d)} n$. This is linear in $n$ for any fixed dimension, 
and the dependence on $d$ is only quasipolynomial. 

\subsection{Some Discrete Geometry} 
We begin by recalling some basic facts and definitions from discrete
geometry~\cite{Mat02}. A classic fact about convexity is Radon's 
theorem. 
\begin{thm}[Radon's theorem] 
For any $P \subseteq \R^d$ with  $d+2$ points there exists a 
partition $(P_1, P_2)$ of $P$ such that 
$\ch(P_1) \cap \ch(P_2) \neq \emptyset$.
\end{thm}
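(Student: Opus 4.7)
The plan is to use the standard linear-algebraic proof of Radon's theorem, which exploits the fact that any $d+2$ points in $\R^d$ must be affinely dependent.

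First, I would label the points $P = \{p_1, \dots, p_{d+2}\}$ and set up the affine dependence explicitly. Consider the homogeneous system asking for coefficients $\alpha_1, \dots, \alpha_{d+2} \in \R$ such that
\[
  \sum_{i=1}^{d+2} \alpha_i p_i = \origin \qquad \text{and} \qquad \sum_{i=1}^{d+2} \alpha_i = 0.
\]
This is a system of $d+1$ linear equations (one per coordinate, plus the scalar sum) in $d+2$ unknowns, so by elementary linear algebra it admits a nontrivial solution $(\alpha_1, \dots, \alpha_{d+2}) \neq \mathbf{0}$.

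Next, I would use the sign pattern of such a solution to construct the partition. Because the $\alpha_i$ sum to zero but are not all zero, the index set splits as $I = \{i : \alpha_i > 0\}$ and $J = \{i : \alpha_i \leq 0\}$, with both $I$ and $J$ nonempty. Setting $S = \sum_{i \in I} \alpha_i = -\sum_{i \in J} \alpha_i > 0$ and dividing the affine dependence through by $S$ yields
\[
  \sum_{i \in I} \frac{\alpha_i}{S}\, p_i \;=\; \sum_{i \in J} \frac{-\alpha_i}{S}\, p_i,
\]
where on each side the coefficients are nonnegative and sum to $1$. Hence the common value is a point lying simultaneously in $\ch(P_1)$ and $\ch(P_2)$ for $P_1 = \{p_i : i \in I\}$ and $P_2 = \{p_i : i \in J\}$, which gives the desired Radon partition.

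There is no real obstacle here; the only point that requires a moment of care is verifying that both $I$ and $J$ are nonempty, which follows immediately from the combination of nontriviality of $(\alpha_i)$ and the constraint $\sum_i \alpha_i = 0$. Everything else is routine linear algebra and the definition of convex combinations.
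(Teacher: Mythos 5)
Your proof is correct, and it is the standard affine-dependence argument for Radon's theorem: $d+2$ points in $\R^d$ admit a nontrivial solution to $\sum_i \alpha_i p_i = \origin$, $\sum_i \alpha_i = 0$; splitting the indices by sign and normalizing by $S = \sum_{i \in I} \alpha_i$ exhibits a common point in both convex hulls, and the constraint $\sum_i \alpha_i = 0$ together with nontriviality forces both sign classes to be nonempty. The paper itself states Radon's theorem as a recalled classic fact and does not supply a proof, so there is nothing to compare against; your argument is exactly what one would expect to find in a reference such as Matou\v{s}ek's book, which the paper cites for background.
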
 
As mentioned above, Tverberg~\cite{Tverberg66} generalized this theorem
for larger point sets. 
\begin{thm}[Tverberg's theorem] 
Any set $P \subseteq \R^d$ with $n = (r-1)(d+1) + 1$ points 
can be partitioned into $r$ sets $P_1, \dots, P_r$  such that 
$\bigcap_{i=1}^r \ch(P_i) \neq \emptyset$.  
\end{thm}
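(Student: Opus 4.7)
The plan is to prove Tverberg's theorem via Sarkaria's tensor-product trick, which packages the argument into a single application of B\'ar\'any's colored Carath\'eodory theorem. First I lift each point $p_i \in \R^d$ to $\bar p_i = (p_i, 1) \in \R^{d+1}$, so that affine combinations of the $p_i$ can be recovered from linear combinations of the $\bar p_i$ by reading off the last coordinate. Next I fix auxiliary vectors $v_1, \dots, v_r \in \R^{r-1}$ satisfying $\sum_{j=1}^{r} v_j = \origin$ and such that any $r-1$ of them are linearly independent; concretely one can take the vertices of a regular simplex around the origin in $\R^{r-1}$.

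For every pair $(i,j) \in \{1,\dots,n\} \times \{1,\dots,r\}$ I form the tensor $q_{i,j} := \bar p_i \otimes v_j$, which lives in a space of dimension $(d+1)(r-1) = n-1$. I group these vectors into $n$ color classes $C_i := \{q_{i,1}, \dots, q_{i,r}\}$. The identity $\sum_j v_j = \origin$ places the origin in $\ch(C_i)$ for every $i$ (with uniform weights $1/r$). Colored Carath\'eodory applied in dimension $n-1$ to these $n$ color classes then supplies an assignment $\sigma : \{1,\dots,n\} \to \{1,\dots,r\}$ and coefficients $\alpha_i \ge 0$ with $\sum_i \alpha_i = 1$ and $\sum_{i=1}^{n} \alpha_i\, \bar p_i \otimes v_{\sigma(i)} = \origin$.

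Regrouping by color rewrites this dependence as $\sum_{j=1}^r w_j \otimes v_j = \origin$, where $w_j := \sum_{i : \sigma(i) = j} \alpha_i \bar p_i$. The key linear-algebra step is that this forces $w_1 = \cdots = w_r$: for each coordinate of $\R^{d+1}$ the equation becomes a linear relation $\sum_j c_j v_j = \origin$ in $\R^{r-1}$, and the kernel of the map $(c_1,\dots,c_r) \mapsto \sum_j c_j v_j$ is one-dimensional (as any $r-1$ of the $v_j$ are linearly independent) and already contains $(1,\dots,1)$ (as $\sum_j v_j = \origin$), so every such relation is a scalar multiple of $(1,\dots,1)$. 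Reading off the common last coordinate of the $w_j$'s gives $\sum_{i : \sigma(i)=j} \alpha_i = 1/r$ for every $j$, so each class $P_j := \{p_i : \sigma(i) = j\}$ is nonempty, and the common point (rescaled by $r$ and projected onto its first $d$ coordinates) lies in $\ch(P_j)$ for every $j$. Leftover points with $\alpha_i = 0$ can be attached to arbitrary classes without disturbing the intersection.

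The main obstacle is the tensor-algebra lemma taking $\sum_j w_j \otimes v_j = \origin$ to $w_1 = \cdots = w_r$; once that is in hand, the argument hinges on the dimension identity $(d+1)(r-1) = n-1$, which is precisely what lets colored Carath\'eodory be invoked with $n$ color classes in $\R^{n-1}$. After that, the rest is bookkeeping.
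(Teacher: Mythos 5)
Your proof is correct, and it is a clean rendition of Sarkaria's tensor-product argument for Tverberg's theorem. Note, however, that the paper does not prove this theorem at all: it states it as a classical fact and cites Tverberg's original paper. There is therefore no in-paper proof to compare against. That said, Sarkaria's proof is exactly the one the authors single out in their concluding section as a promising avenue for an algorithmic Tverberg theorem, so your choice of argument is well aligned with the paper's perspective.

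Every step in your writeup checks out: the lift to $\R^{d+1}$ turns affine dependencies into linear ones; the vectors $v_1,\dots,v_r\in\R^{r-1}$ (e.g.\ regular-simplex vertices) have $\sum_j v_j=\origin$ with any $r-1$ of them independent; the tensors $\bar p_i\otimes v_j$ live in $\R^{(d+1)(r-1)}=\R^{n-1}$, so the $n$ color classes $C_i$ are exactly what Colorful Carath\'eodory requires in that ambient dimension; each $C_i$ contains the origin in its hull via uniform weights; and the linear-algebra step is sound, since for each coordinate the relation $\sum_j (w_j)_a v_j=\origin$ must be proportional to $(1,\dots,1)$ because the kernel of $(c_1,\dots,c_r)\mapsto\sum_j c_j v_j$ is one-dimensional. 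Reading the last coordinate gives $\sum_{i:\sigma(i)=j}\alpha_i=1/r$ for every $j$, so every class is nonempty, and rescaling by $r$ yields the common point in each $\ch(P_j)$. One tiny stylistic remark: the ``leftover points with $\alpha_i=0$'' are already placed by $\sigma$ (Colorful Carath\'eodory selects one representative from every color class), so no separate reassignment is needed, though assigning them arbitrarily is also harmless.
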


Let $P$ be a set of $n$ points in $\R^d$. We say that $x \in \R^d$ has
\emph{Tverberg depth} $r$ (with respect to $P$) if there is a partition of 
$P$ into sets $P_1, \dots, P_r$ such that $x \in \bigcap_{i=1}^r \ch(P_i)$.  
Tverberg's theorem thus
states that, for any set $P$ in $\R^d$, there is a point of Tverberg depth 
at least
$\lfloor (n-1)/(d+1) + 1 \rfloor = \lceil n/(d+1)\rceil$. 
Note that every point with Tverberg depth $r$ also has halfspace depth
$r$.
Thus, from now on we will use the term \emph{depth} as a shorthand for 
Tverberg depth.
As remarked above, Tverberg's theorem immediately implies the famous
Centerpoint Theorem (see \cite{Mat02}):
\begin{thm}[Centerpoint Theorem]
For any set $P$ of $n$ points in $\R^d$ there is a point $c$
such that all halfspaces containing $c$ contain at least 
$\lceil n/(d+1) \rceil$ points from $P$.  
\end{thm}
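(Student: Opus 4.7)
The plan is to deduce the Centerpoint Theorem directly from Tverberg's theorem, which has just been stated. First I would invoke Tverberg on $P$ to obtain a partition into $r = \lceil n/(d+1) \rceil$ sets $P_1, \dots, P_r$ together with a point $c \in \bigcap_{i=1}^r \ch(P_i)$. The claim is that this $c$ is the desired centerpoint; note that to get exactly $r = \lceil n/(d+1)\rceil$ sets out of Tverberg (which requires $n = (r-1)(d+1)+1$ points), one may either pad $P$ with copies of an arbitrary point or simply discard a few points, as the resulting bound is only strengthened.

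To verify that $c$ is a centerpoint, let $h$ be an arbitrary closed halfspace with $c \in h$. The key step is to show that $h$ meets every $P_i$. I would argue by contradiction: if $P_i \cap h = \emptyset$ for some $i$, then $P_i$ lies entirely in the complementary open halfspace $\R^d \setminus h$. Since open halfspaces are convex, $\ch(P_i) \subseteq \R^d \setminus h$, which contradicts $c \in \ch(P_i)$ together with $c \in h$. Because the sets $P_1, \dots, P_r$ are pairwise disjoint, each contributes at least one distinct point to $h \cap P$, so $|h \cap P| \geq r = \lceil n/(d+1)\rceil$, as required.

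There is no real obstacle in this argument --- essentially all the work is done by Tverberg's theorem itself. The only substantive observation is the elementary convexity fact that if a convex set is contained in an open halfspace then it is disjoint from the complementary closed halfspace, and this is what powers the counting argument of the second paragraph. The one cosmetic subtlety worth handling carefully is the ceiling: Tverberg's theorem as stated assumes $n = (r-1)(d+1)+1$ exactly, so the argument must either apply it to a suitable subset of $P$ or note that adding dummy copies of a point cannot decrease halfspace depth.
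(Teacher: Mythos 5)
Your argument is correct and is exactly the derivation the paper alludes to when it writes that Tverberg's theorem ``immediately implies'' the Centerpoint Theorem (``by convexity, any point in $\bigcap_{i=1}^r \ch(P_i)$ must be a centerpoint''); the paper does not spell out the halfspace-counting step, but your contradiction via convexity of the open complementary halfspace is the standard way to fill it in. One small nit: discarding points down to $(r-1)(d+1)+1$ with $r=\lceil n/(d+1)\rceil$ is the clean option, since $(r-1)(d+1)+1\le n$ always holds and halfspace depth only increases when passing from a subset back to $P$, whereas the padding variant requires a bit more care to ensure the extra points do not eat into the count.
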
 

\noindent
Finally, another classic theorem will be useful for us.

\begin{thm}[Carath\'eodory's theorem]\label{thm:carat} 
Suppose that $P$ is a set of $n$ points 
in $\R^d$ and $x \in \ch(P)$. Then there is
a set of $d+1$ points $P' \subseteq P$ such that $x \in \ch(P')$.  
\end{thm}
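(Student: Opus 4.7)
The plan is to run a standard support-reduction argument: start from any representation of $x$ as a convex combination of points of $P$, and show that whenever more than $d+1$ points are used, one of them can be eliminated.

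First I would fix a convex combination $x = \sum_{i=1}^{k} \lambda_i p_i$ with $p_i \in P$, $\lambda_i > 0$, and $\sum_{i=1}^k \lambda_i = 1$, taking $k$ as small as possible among all such representations. The goal is to prove $k \le d+1$; once this is done, we can take $P' = \{p_1,\dots,p_k\}$ (and add arbitrary points from $P$ if fewer than $d+1$ are available) to conclude the theorem.

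Assume for contradiction that $k \ge d+2$. Then the $k-1 \ge d+1$ vectors $p_2 - p_1,\dots,p_k - p_1$ live in $\R^d$ and are therefore linearly dependent, so there exist scalars $\mu_2,\dots,\mu_k$, not all zero, with $\sum_{i=2}^{k} \mu_i(p_i - p_1) = 0$. Setting $\mu_1 \eqdef -\sum_{i=2}^k \mu_i$ yields $\sum_{i=1}^{k} \mu_i p_i = \origin$ and $\sum_{i=1}^k \mu_i = 0$. For any real $t$ we then have
\[
x = \sum_{i=1}^{k} (\lambda_i - t\mu_i)\, p_i, \qquad \sum_{i=1}^{k} (\lambda_i - t\mu_i) = 1.
\]
Since the $\mu_i$ sum to zero and are not all zero, at least one $\mu_i$ is strictly positive; choose $t>0$ to be the smallest value such that some coefficient $\lambda_j - t\mu_j$ vanishes, namely $t = \min\{\lambda_i/\mu_i : \mu_i > 0\}$. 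By construction all remaining coefficients are nonnegative and sum to $1$, so $x$ is exhibited as a convex combination of at most $k-1$ points of $P$, contradicting the minimality of $k$.

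The only delicate point is to verify that a suitable $t$ exists and that it preserves nonnegativity of all coefficients, but this follows from the observations that $\sum_i \mu_i = 0$ forces at least one $\mu_i$ to be positive, and that the minimum $t = \min\{\lambda_i/\mu_i : \mu_i>0\}$ is precisely the largest value for which the updated coefficients stay nonnegative while driving one of them to zero. Everything else is routine linear algebra, so no serious obstacle is expected.
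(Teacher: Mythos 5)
The paper does not prove Carath\'eodory's theorem; it is stated as a classical background fact (with~\cite{Mat02} as the standing reference), so there is no in-paper proof to compare against. Your support-reduction argument is the standard textbook proof and it is correct: $k-1 \geq d+1$ vectors $p_i - p_1$ in $\R^d$ must be linearly dependent; the resulting affine relation $\sum_i \mu_i p_i = \origin$ with $\sum_i \mu_i = 0$ and the $\mu_i$ not all zero forces some $\mu_i > 0$; and $t = \min\{\lambda_i/\mu_i : \mu_i > 0\}$ is exactly the largest $t$ keeping every $\lambda_i - t\mu_i$ nonnegative while driving at least one of them to zero, so the support shrinks, contradicting minimality of $k$. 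Your remark about padding $P'$ up to $d+1$ points is the right way to reconcile the proof with the statement's literal demand for exactly $d+1$ points. It is also worth noting that the very same affine-dependency elimination is what the paper's Lemma~\ref{lem:pruning} performs algorithmically, one dependency at a time via Gaussian elimination, to prune a witnessing convex combination to $d+1$ points in $O(d^3 m)$ time; your proof is the existential core of that pruning routine.
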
 

This means
that, in order to describe a Tverberg partition of depth $r$, we 
need only $r (d+1)$ points from $P$.  This observation is also 
used by Miller and Sheehy~\cite{MillerSh10}. They further note that 
it takes $O(d^3)$ time to
replace $d+2$ points by $d+1$ points by using 
Gaussian elimination.  We denote the process of replacing
larger sets by sets of size $d+1$ as \emph{pruning}, see Lemma~\ref{lem:pruning}.

\subsection{Our Contribution} 
We now describe our results in more detail. 
In Section~\ref{sec:simpleAlgorithm}, we present a
simple lifting argument which leads to an easy Tverberg approximation
algorithm. 

\begin{thm}\label{thm:simpleAlgorithm} 
Let $P$ be a set of $n$ points in
$\R^d$ in general position. One can compute a Tverberg point of depth 
$\lceil n/2^d \rceil$ for $P$ and the corresponding partition in 
time $d^{O(1)} n$.  
\end{thm}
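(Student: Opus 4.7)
The plan is a dimension-reducing induction. The base case $d=1$ is the classical one-dimensional median, which has halfspace (hence Tverberg) depth $\lceil n/2 \rceil$ and can be found in time $O(n)$.

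For the inductive step, I would project the points onto the $x_d$-axis, find the median coordinate $m$ in linear time, and let $H$ be the hyperplane $\{x_d = m\}$. General position guarantees that no point lies on $H$, so $P$ splits into $P_+$ (above $H$) and $P_-$ (below $H$) with $|P_+|, |P_-| \ge \lfloor n/2 \rfloor$. Pair up the points arbitrarily to form $\lfloor n/2 \rfloor$ pairs $(p_i^-, p_i^+)$ with $p_i^\pm \in P_\pm$, and for each pair compute the unique point $q_i$ where the segment $\overline{p_i^- p_i^+}$ crosses $H$. This yields a set $Q = \{q_1, \dots, q_{\lfloor n/2 \rfloor}\}$ of points on $H \cong \R^{d-1}$.

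Now I would recursively compute a point $c \in H$ of Tverberg depth $k = \lceil |Q|/2^{d-1}\rceil \ge \lceil n/2^d \rceil$ for $Q$, together with a witnessing partition $Q = Q_1 \cup \dots \cup Q_k$ with $c \in \ch(Q_j)$ for every $j$. To lift this partition back to $P$, define $P_j \eqdef \bigcup_{q_i \in Q_j}\{p_i^-, p_i^+\}$. Writing $c = \sum_{q_i \in Q_j} \lambda_i\, q_i$ as a convex combination and expanding each $q_i = \alpha_i p_i^- + (1-\alpha_i) p_i^+$ with $\alpha_i \in [0,1]$ shows immediately that $c \in \ch(P_j)$ for each $j$; the $P_j$ are disjoint because the pairs $(p_i^-, p_i^+)$ are. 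Any points of $P$ that were not paired (at most one per recursion level, plus anything left over in $Q$) can be appended to an arbitrary $P_j$ without decreasing the depth.

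The running time satisfies $T(n,d) \le T(\lfloor n/2 \rfloor, d-1) + O(nd)$, which solves to $O(nd)$, well within the claimed $d^{O(1)}n$ bound. The only conceptually delicate point is the lifting argument, but once one observes that a convex combination over crossing points pulls back to a convex combination over the defining pairs, the whole scheme is just a recursive ham-sandwich-style halving. Minor bookkeeping is needed for parity (odd $n$) and the ceiling in the depth bound, neither of which affects the asymptotic guarantee.
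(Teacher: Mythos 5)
Your approach is correct in outline but genuinely different from the paper's. The paper projects the \emph{entire} set $P$ onto $\R^{d-1}$ (keeping all $n$ points), recursively finds a Tverberg point and pruned partition for the projection, and only then performs the median--pairing step on the orthogonal line through the recovered point (Lemma~\ref{lem:lifting}); the halving thus happens during the \emph{unwinding} of the recursion. You instead do the median--halving \emph{before} recursing: pair points across a median hyperplane $H$, take the $\lfloor n/2 \rfloor$ segment crossings as a new point set $Q \subseteq H \cong \R^{d-1}$, and recurse on $Q$. Both are $d$ rounds of one-dimensional median--pairing, unrolled in opposite orders. Your version has the small structural advantage that the recursion works on shrinking point sets (so there is no need to prune the intermediate partitions, which the paper's proof must do to keep the lifting step cheap); it has the small disadvantage that the sets of the final partition are not pruned to size $d+1$, which the later theorems in the paper silently rely on, so a final pruning pass would still be wanted.

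There is, however, a concrete arithmetic gap. You assert $\lceil |Q|/2^{d-1}\rceil \geq \lceil n/2^d\rceil$ with $|Q| = \lfloor n/2\rfloor$, but this fails whenever $n \equiv 1 \pmod{2^d}$: for $n=5$, $d=2$ you get $\lceil 2/2\rceil = 1 < 2 = \lceil 5/4\rceil$. Appending the unpaired middle point to an arbitrary $P_j$, as you suggest, keeps the partition valid but does not recover the missing unit of depth. The clean fix is to let $H$ pass \emph{through} the median point $p^*$ (general position ensures $p^*$ is the only point on $H$) and include $p^*$ itself as an extra element of $Q$, giving $|Q| = \lceil n/2\rceil$; then the identity $\lceil\,\lceil n/2\rceil/2^{d-1}\,\rceil = \lceil n/2^d\rceil$ restores the claimed bound, with $p^*$ pulling back as a singleton part. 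This is exactly the rounding behavior the paper gets for free from the order $\lceil \lceil n/2^{d-1}\rceil/2\rceil = \lceil n/2^d\rceil$. With that repair, your argument is a valid alternative to Lemma~\ref{lem:lifting}.
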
 

While this does not yet give a good approximation ratio (though constant 
for any fixed $d$), it is a natural approach to the problem: it 
computes a higher dimensional Tverberg point via successive median 
partitions---just as a Tverberg point is a higher dimensional generalization 
of the $1$-dimensional median.

By collecting several low-depth points and afterwards applying the brute-force 
algorithm on small point sets, we get an even higher depth in linear time for 
any \emph{fixed} dimension:

\begin{thm}\label{thm:betterTverberg} 
Let $P$ be a set of $n$ points in $\R^d$. Then one can find a 
Tverberg point of depth $\lceil n/2(d+1)^2 \rceil$ and
a corresponding partition in time $f(2^{d+1}) + d^{O(1)} n$, where $f(m)$ is 
the time to compute a Tverberg point of depth $\lceil m/(d+1) \rceil$ for 
$m$ points by brute force.
\end{thm}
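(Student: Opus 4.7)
My plan is a two-level divide-and-conquer that combines the fast approximation of Theorem~\ref{thm:simpleAlgorithm} with a single brute-force Tverberg call on a constant-sized (in $n$) point set. First, I partition $P$ arbitrarily into $k = 2^{d+1}$ subgroups $G_1, \ldots, G_k$ of nearly equal size $\lceil n/2^{d+1} \rceil$. A symbolic perturbation guarantees the general-position hypothesis of Theorem~\ref{thm:simpleAlgorithm} without affecting the combinatorial structure of any resulting Tverberg partition. On each $G_i$, I invoke Theorem~\ref{thm:simpleAlgorithm} to obtain a local Tverberg point $c_i$ of depth $D_i \geq \lceil |G_i|/2^d \rceil$, and I use Lemma~\ref{lem:pruning} to replace every part of the associated partition by a $(d+1)$-tuple $T_{i,\ell} \subseteq G_i$ with $c_i \in \ch(T_{i,\ell})$ for $\ell = 1, \ldots, D_i$. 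Since each call takes $d^{O(1)} |G_i|$ time, this stage costs $d^{O(1)} n$ overall.

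Next, I apply the brute-force algorithm to the $2^{d+1}$ collected Tverberg points $C = \{c_1, \ldots, c_k\}$, producing in time $f(2^{d+1})$ a Tverberg point $c^{\ast}$ of depth $t \geq \lceil 2^{d+1}/(d+1) \rceil$ with a partition $C = S_1 \sqcup \cdots \sqcup S_t$.

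Finally, I lift the coarse partition back to $P$. For every top-level part $S_j$ and every index $\ell \in \{1, \ldots, \min_{c_i \in S_j} D_i\}$, set $U_{j,\ell} = \bigcup_{c_i \in S_j} T_{i,\ell}$. The sets $U_{j,\ell}$ are pairwise disjoint: different $S_j$'s use disjoint groups $G_i$, and different values of $\ell$ use disjoint parts $T_{i,\ell}$ inside each $G_i$. Each $U_{j,\ell}$ contains $c^{\ast}$ in its convex hull, since $c^{\ast}$ is a convex combination of the points $c_i$ with $c_i \in S_j$ and each such $c_i$ is itself a convex combination of the points in $T_{i,\ell}$. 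Writing out the lifted partition takes $d^{O(1)} n$ additional time, so the total running time is $f(2^{d+1}) + d^{O(1)} n$ as claimed.

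The main obstacle is the depth analysis. The Tverberg depth of $c^{\ast}$ at the bottom level equals the number of lifted parts, namely $\sum_{j=1}^{t} \min_{c_i \in S_j} D_i \geq t \cdot \min_i D_i$. Substituting the lower bounds $t \geq \lceil 2^{d+1}/(d+1)\rceil$ and $\min_i D_i \geq \lceil \lceil n/2^{d+1}\rceil / 2^d \rceil$ and tracking the ceilings carefully is the delicate step that recovers the advertised lower bound $\lceil n/2(d+1)^2 \rceil$ on the Tverberg depth of $c^{\ast}$ in $P$.
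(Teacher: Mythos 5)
The architecture of your argument (local Tverberg points, brute force on the collected $c_i$, lift via Lemma~\ref{lem:combine}) is the right one, but the first stage does not produce the numbers you need, and the ``delicate'' depth calculation you defer cannot actually be made to work.

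You partition $P$ into $k = 2^{d+1}$ disjoint groups of size about $n/2^{d+1}$ each and run Theorem~\ref{thm:simpleAlgorithm} on each group. This gives local depths $D_i \approx \lceil (n/2^{d+1})/2^d\rceil \approx n/2^{2d+1}$. The brute-force step on $|C| = 2^{d+1}$ points gives $t \approx 2^{d+1}/(d+1)$. Your lifted depth is at best $t\cdot\min_i D_i \approx \frac{2^{d+1}}{d+1}\cdot\frac{n}{2^{2d+1}} = \frac{n}{2^d(d+1)}$, which is smaller than $\frac{n}{2(d+1)^2}$ whenever $2^d > 2(d+1)$, i.e.\ for every $d\geq 4$. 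Ceilings cannot rescue an exponential gap. The flaw is that you split the $n$ points evenly among $2^{d+1}$ groups, so each group is tiny and each local Tverberg point is shallow; the pruning step throws away all but $D_i(d+1)\ll|G_i|$ points in each group, but you never reuse those discarded points.

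The paper's proof (Lemma~\ref{lem:collection}) avoids this waste by not fixing the groups in advance. It repeatedly takes $\lceil n/2\rceil$ of the currently remaining points, runs the simple algorithm to get a Tverberg point of depth $\delta=\lceil n/2^{d+1}\rceil$, \emph{prunes} the partition down to at most $\delta(d+1)$ points, removes only those pruned points from the pool, and repeats. Because each call sees $\lceil n/2\rceil$ points, every $c_i$ has the same large depth $\delta\approx n/2^{d+1}$, and since each iteration consumes only about $n(d+1)/2^{d+1}$ points, you can extract about $\alpha \approx 2^{d}/(d+1)$ such points before the pool shrinks below $n/2$. Then $\lceil\alpha/(d+1)\rceil\cdot\delta\approx\frac{2^d}{(d+1)^2}\cdot\frac{n}{2^{d+1}} = \frac{n}{2(d+1)^2}$, as claimed. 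Replacing your even partition into $2^{d+1}$ groups with this ``draw-half, use-few, recycle-the-rest'' extraction is precisely the missing idea. (You also need to handle $n\leq 2^{d+1}$ separately by brute force, which your sketch omits.)
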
 

Finally, by combining our approach with that of Miller and Sheehy, we can improve 
the running time to be quasipolynomial in $d$:
\begin{thm}\label{thm:bootstrap} 
Let $P$ be a set of $n$ points in
$\R^d$. Then one can compute a Tverberg point of depth 
$\lceil n/4(d+1)^3 \rceil$ and a corresponding pruned partition in 
time $d^{O(\log d)} n$.  
\end{thm}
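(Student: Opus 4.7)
The plan is to strengthen the algorithm behind Theorem~\ref{thm:betterTverberg} by swapping its brute-force subroutine for the Miller--Sheehy algorithm~\cite{MillerSh10}. That algorithm has two parts: a linear-time lifting that reduces the Tverberg problem on $P$ to a ``core'' Tverberg problem on a subset $P'\subseteq P$ of size $m$, and a call to an oracle that solves Tverberg on $P'$. In Theorem~\ref{thm:betterTverberg} the oracle is brute force on $m=2^{d+1}$ points, taking time $f(2^{d+1})$ and returning an exact Tverberg partition of depth $\lceil m/(d+1)\rceil$; the lifting amplifies this to depth $\lceil n/(2(d+1)^2)\rceil$ on all of $P$, paying a factor of $2(d+1)$.

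The key observation is that this amplification is multiplicative in the core's approximation quality: if the oracle returns depth $\lceil m/\alpha\rceil$ on $P'$, the lifting returns depth $\lceil n/(2\alpha(d+1))\rceil$ on $P$. I would replace the brute-force oracle by Miller--Sheehy, which produces depth $\lceil m/(2(d+1)^2)\rceil$ on $m$ points in time $m^{O(\log d)}$. Plugging $\alpha=2(d+1)^2$ into the amplification formula yields final depth $\lceil n/(4(d+1)^3)\rceil$, matching the statement. The depth analysis is otherwise identical to that of Theorem~\ref{thm:betterTverberg}; only the value of $\alpha$ changes.

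For the running time I need the oracle call to cost $d^{O(\log d)}$ rather than the exponential $(2^{d+1})^{O(\log d)}$ that would result from using the original Theorem~\ref{thm:betterTverberg} parameter $m=2^{d+1}$. I would therefore pick the core subproblem size $m$ as a polynomial in $d$, say $m=\Theta((d+1)^3)$; then the Miller--Sheehy call takes $m^{O(\log d)}=d^{O(\log d)}$, and adding the $d^{O(1)} n$ cost of the lifting yields $d^{O(\log d)} n$ overall. The main obstacle is to verify that the lifting argument of Theorem~\ref{thm:betterTverberg} actually permits such a small subproblem. I expect that the only requirement on $m$ is essentially $m\geq\alpha$, so that the per-core depth is at least one; in that case the choice $m=\Theta((d+1)^3)$ comfortably suffices with $\alpha=2(d+1)^2$. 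Should the framework implicitly need $m$ as large as $2^{d+1}$ for geometric reasons (for example, because the lifting stage itself constructs groups of size $2^d$ in the style of Theorem~\ref{thm:simpleAlgorithm}), one would instead iterate the cheaper reduction $O(\log d)$ times, each iteration shrinking $n$ by a factor polynomial in $d$ and roughly doubling the approximation constant, until the residual instance is small enough to be handed to Miller--Sheehy directly; this still gives the same depth and running-time bounds.
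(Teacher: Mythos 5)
Your first paragraph correctly identifies the template (this is essentially Theorem~\ref{thm:betterTverberg-withMS} of the paper, which gives time $2^{O(d\log d)} + d^{O(1)}n$), and your instinct that ``the main obstacle is to verify that the lifting argument of Theorem~\ref{thm:betterTverberg} actually permits such a small subproblem'' is exactly right---but the obstacle is real and your proposed workarounds do not remove it.

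The core gap: you cannot ``pick the core subproblem size $m$'' freely. In Lemma~\ref{lem:collection} the number of cluster centers produced is $\alpha = \lceil n(1-1/\beta)/(\delta(d+1))\rceil \approx (\beta-1)\rho/(d+1)$, which is dictated by the depth parameter $\rho$ of the base algorithm, not chosen by the designer. With the only available base in Section~2 (Theorem~\ref{thm:simpleAlgorithm}, $\rho = 2^d$), the set $C$ handed to Miller--Sheehy necessarily has size $\Theta(2^d/d)$, and this is what produces the $2^{O(d\log d)}$ term. To make $|C| = \poly(d)$ one needs a base algorithm whose $\rho$ is $\poly(d)$. Trying instead to truncate $C$ to $(d+1)^3$ clusters does not help either: those clusters would then cover only an exponentially small fraction of $P$, and the combined depth drops far below $n/4(d+1)^3$. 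Your fallback---``iterate the cheaper reduction $O(\log d)$ times, each iteration shrinking $n$''---is a recursion on the number of points; one would actually need $\Theta(\log n/\log d)$ iterations to reach a $\poly(d)$-sized instance, and the per-iteration loss would compound to a non-polynomial factor. That is not the right recursion.

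The idea you are missing is to recurse on the \emph{dimension}, via Lemma~\ref{lem:liftingExtended} / Lemma~\ref{lem:lowDtoHighD}. The paper sets $\delta = \lceil d/2\rceil$ and reduces a $d$-dimensional instance to two $\lceil d/2\rceil$-dimensional instances. By induction, the lower-dimensional algorithm already achieves ratio $4(\lceil d/2\rceil+1)^3$, so after Lemma~\ref{lem:lowDtoHighD} one has a $d$-dimensional routine with $\rho = 16(\lceil d/2\rceil+1)^6 = \poly(d)$. Feeding this into Lemma~\ref{lem:collection} yields $|C| = \Theta(\rho/(d+1)) = \poly(d)$ cluster centers, at which point the Miller--Sheehy call costs $|C|^{O(\log d)} = d^{O(\log d)}$, and Lemma~\ref{lem:combine} recovers the final depth $\lceil n/4(d+1)^3\rceil$. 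The $O(\log d)$ levels of dimension-halving are precisely the source of the $d^{O(\log d)}$ running time. Without this dimension recursion the approximation parameter $\rho$ stays exponential and the time bound cannot be achieved.
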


\noindent
In Section \ref{sec:comparison}, we compare these results to the Miller-Sheehy 
algorithm and its extensions.

\section{A Simple Fixed-Parameter Algorithm}\label{sec:simpleAlgorithm} 

We now present a simple algorithm that runs in linear time for any fixed 
dimension and computes a point of depth $\lceil n/2^d \rceil$. For this, we 
show how to compute a Tverberg point by recursion on the dimension. As a 
byproduct, we obtain a quick proof of a weaker version of Tverberg's theorem.
First, however, we give a few more details about the basic operations
performed by our algorithm.

\subsection{Basic Operations}
Our algorithm builds a Tverberg partition for a $d$-dimensional point set
$P$ by recursion on the dimension. In each step, we store a 
Tverberg partition for some point set, together with an approximate Tverberg point $c$.
We have for each set $P_i$ in the partition a convex combination that witnesses
$c \in \ch(P_i)$. 
All the points that arise during our algorithms are obtained 
by repeatedly taking convex combinations of the input points, so 
the following simple observation lets us maintain this invariant.
\begin{observation} 
If $x_i = \sum_{p \in P_i} \alpha_p p$ and $y = \sum_i \beta_i x_i$ are 
convex combinations, then \[ y = \sum_i \sum_{p \in P_i} \beta_i \alpha_p p \] 
is a convex combination of the set $\bigcup P_i$ for $y$. \qed
\end{observation}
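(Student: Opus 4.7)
The plan is to verify the three defining properties of the expression $\sum_i \sum_{p \in P_i} \beta_i \alpha_p\, p$ directly: namely, (i) each coefficient is nonnegative, (ii) the coefficients sum to one, and (iii) the weighted sum evaluates to $y$. Nonnegativity is immediate from the hypotheses: since $x_i$ is a convex combination of $P_i$, we have $\alpha_p \ge 0$ for each $p \in P_i$, and since $y$ is a convex combination of the $x_i$, we have $\beta_i \ge 0$; hence every product $\beta_i \alpha_p$ is nonnegative.

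For the normalization, I would compute
\[
\sum_i \sum_{p \in P_i} \beta_i \alpha_p \;=\; \sum_i \beta_i \Bigl(\sum_{p \in P_i} \alpha_p\Bigr) \;=\; \sum_i \beta_i \cdot 1 \;=\; 1,
\]
using that the inner sum equals $1$ because each $x_i$ is convex, and the outer sum equals $1$ because $y$ is convex in the $x_i$. For the representation, substitute $x_i = \sum_{p \in P_i} \alpha_p p$ into $y = \sum_i \beta_i x_i$ and distribute:
\[
y \;=\; \sum_i \beta_i \sum_{p \in P_i} \alpha_p\, p \;=\; \sum_i \sum_{p \in P_i} \beta_i \alpha_p\, p.
\]

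The only subtle point, and the one I would call out explicitly, is that the sets $P_i$ need not be disjoint, so a single point $p \in \bigcup_i P_i$ may contribute through several indices. The claim should therefore be read as: after collecting terms, the coefficient of each $p \in \bigcup_i P_i$ is $\gamma_p := \sum_{i:\, p \in P_i} \beta_i \alpha_p$, which is nonnegative, and $\sum_p \gamma_p = 1$ by the calculation above, so $y = \sum_p \gamma_p\, p$ is indeed a convex combination supported on $\bigcup_i P_i$. There is no real obstacle beyond this bookkeeping; the result is just bilinearity of the double sum combined with the two normalization identities $\sum_{p \in P_i}\alpha_p = 1$ and $\sum_i \beta_i = 1$.
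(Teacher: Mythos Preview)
Your proof is correct. The paper treats this observation as immediate (it is stated with a terminal \qed and no argument), so your verification of nonnegativity, normalization, and the substitution identity is exactly the routine check the authors left to the reader; your remark about collecting coefficients when the $P_i$ overlap is a nice bit of extra care but not essential to the claim as used.
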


By Carath\'eodory's theorem (Theorem~\ref{thm:carat}), a Tverberg 
partition of depth $r$ can be described by $r (d+1)$ points from $P$. In order 
to achieve running time $O(n)$, we need the following 
observation, also used by Miller and Sheehy~\cite{MillerSh10}.

\begin{lem}\label{lem:pruning} 
Let $Q \subseteq \R^d$ be a set of $m \geq d + 2$ points with $c \in \ch(Q)$, 
and suppose we have a convex combination of $Q$ for $c$. 
Then we can find a subset $Q' \subset Q$ with $d+1$ points such
that $c \in \ch(Q')$, together with a corresponding 
convex combination, in 
time $O(d^3 m)$.
\end{lem}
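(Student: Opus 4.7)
The plan is to iterate the standard Carathéodory reduction, but applied to batches of $d+2$ points at a time so that each elimination is cheap. I would maintain the invariant that we store a subset $Q' \subseteq Q$ together with explicit convex-combination coefficients $\alpha_q \geq 0$ (for $q \in Q'$) that sum to $1$ and satisfy $\sum_{q \in Q'} \alpha_q q = c$. Initially $Q' = Q$, with the coefficients given by the input combination.

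The elimination step is the classical one. Any $d+2$ points of $\R^d$ are affinely dependent: given $q_1, \dots, q_{d+2} \in Q'$, I would compute scalars $\lambda_1, \dots, \lambda_{d+2}$, not all zero, with $\sum_i \lambda_i = 0$ and $\sum_i \lambda_i q_i = \origin$, by finding a nontrivial kernel vector of the $(d+1) \times (d+2)$ matrix whose columns are $(q_i, 1)^T$. Gaussian elimination accomplishes this in $O(d^3)$ time. After possibly negating all $\lambda_i$ so that at least one is positive, I then replace the current coefficients $\alpha_i$ (of these $d+2$ points only) by $\alpha_i - t\lambda_i$, where $t = \min\{\alpha_i/\lambda_i : \lambda_i > 0\}$. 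The two kernel conditions guarantee that the new coefficients still sum to $1$ and still represent $c$; the choice of $t$ keeps them non-negative and forces at least one to be zero, so the corresponding point can be dropped from $Q'$.

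The key efficiency idea is never to run Gaussian elimination on the entire set of size $m$. Restricting each elimination to an arbitrary $d+2$ points of $Q'$ (leaving the coefficients of the remaining points of $Q'$ untouched) gives cost $O(d^3)$ per round while still removing at least one point per round. Since $|Q'|$ starts at $m$ and we stop when $|Q'| = d+1$, there are at most $m - d - 1$ rounds, giving a total running time of $O(d^3 m)$, as claimed.

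The only real obstacle is maintaining the convex-combination invariant cleanly; this reduces entirely to the two linear conditions $\sum_i \lambda_i = 0$ and $\sum_i \lambda_i q_i = \origin$, which the kernel computation supplies automatically. Beyond this bookkeeping, the proof is just an iterated application of Carathéodory's classical argument, with the observation that localizing each step to $d+2$ points avoids any dependence of the per-step cost on $m$.
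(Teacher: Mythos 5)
Your proposal is correct and follows the same approach as the paper: both take an arbitrary subset of $d+2$ points, resolve an affine dependency via Gaussian elimination in $O(d^3)$ time, update the coefficients to zero one out, and repeat $m-(d+1)$ times for a total of $O(d^3 m)$. You have simply spelled out the Carathéodory elimination step in more detail than the paper, which cites Miller--Sheehy and Gr\"otschel--Lov\'asz--Schrijver for it.
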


\begin{proof} 
Miller and Sheehy observe that replacing $d+2$ points by $d+1$ points takes 
$O(d^3)$ time by finding an affine dependency through Gaussian elimination, 
see Gr\"otschel, Lov{\'a}sz, and Schrijver~\cite[Chapter 1]{GroetschelLoSc93}. 
The choice of affine dependencies does not matter. Thus, in order to eliminate
a point from $Q$, we can take any 
subset of size $d+2$, resolve one of the affine dependencies,
and update the convex combination accordingly. Repeating this process, we 
can replace $m$ points by $d+1$ points in time $(m - (d+1))O(d^3) = O(d^3m)$.
\qed\end{proof}

The process in Lemma~\ref{lem:pruning} is called
\emph{pruning}, and we call a partition of a $d$-dimensional point set in 
which all sets have size at most $d + 1$ a \emph{pruned partition}. 
This will enable us to bound the cost of many operations in terms of the 
dimension $d$, instead of the number of points $n$.

\subsection{The Lifting Argument and a Simple Algorithm} 
Let $P$ be a $d$-dimensional point set.
As a Tverberg point is a higher dimensional version of the median, a 
natural way to compute a Tverberg point for $P$
is to first project $P$ to some lower-dimensional space,
then to recursively compute a good Tverberg point for this projection, 
and to use this point to find a solution in the higher-dimensional 
space. Surprisingly, we are not aware of any such argument 
having appeared in the literature so far.

In what follows, we will describe how to \emph{lift} a lower-dimensional 
Tverberg point into some higher dimension. Unfortunately, this process
will come at the cost of a decreased depth for the lifted Tverberg point.
For clarity of presentation, we first explain the lifting lemma in its 
simplest form. 
In Section~\ref{sec:generalLiftingLemma}, we then state the lemma 
in its full generality.

\begin{figure}[ht] 
  \begin{center}
  \subfigure[project]{ 
    \includegraphics[scale=.4]{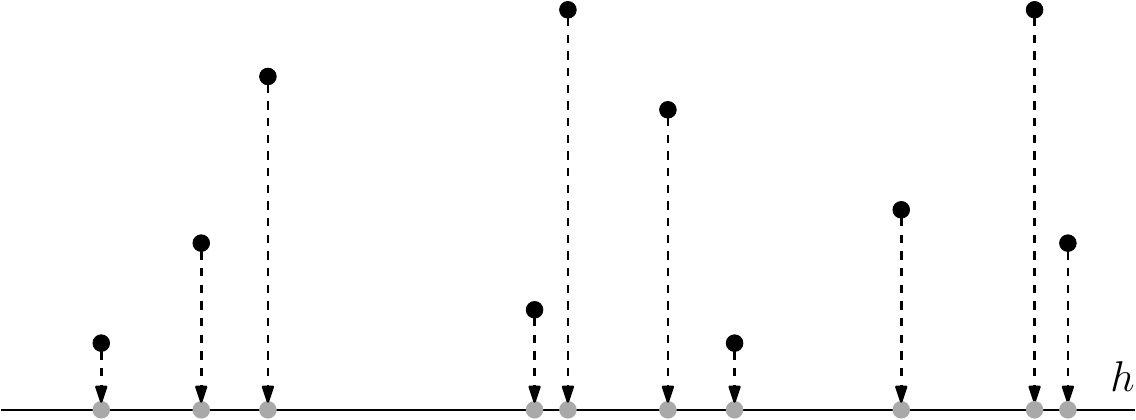}
   } 
  \subfigure[find partition]{ 
    \includegraphics[scale=.4]{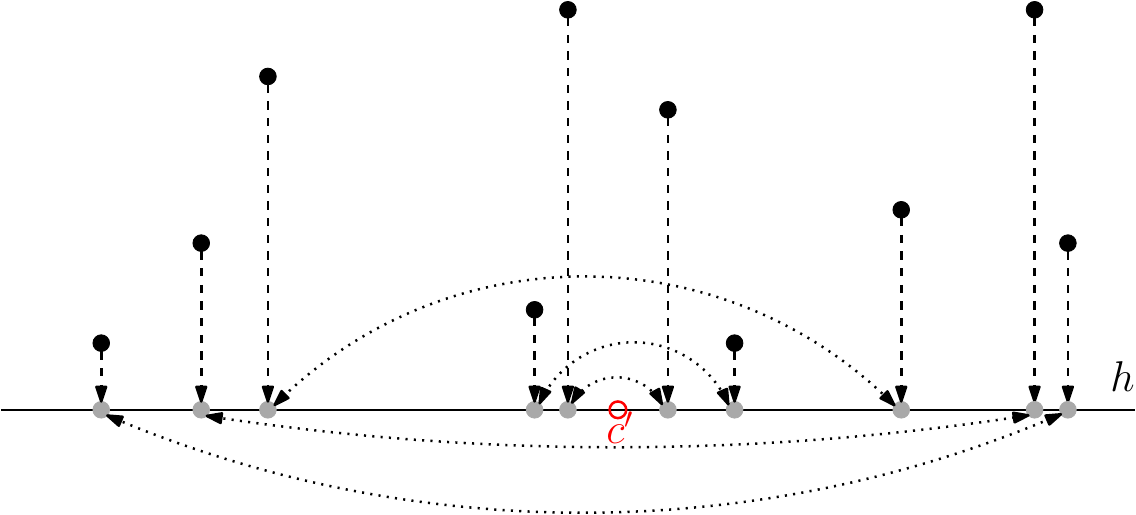}

   } 
  \subfigure[intersect hulls of the sets with $h^{\perp}$]{ 
    \includegraphics[scale=.4]{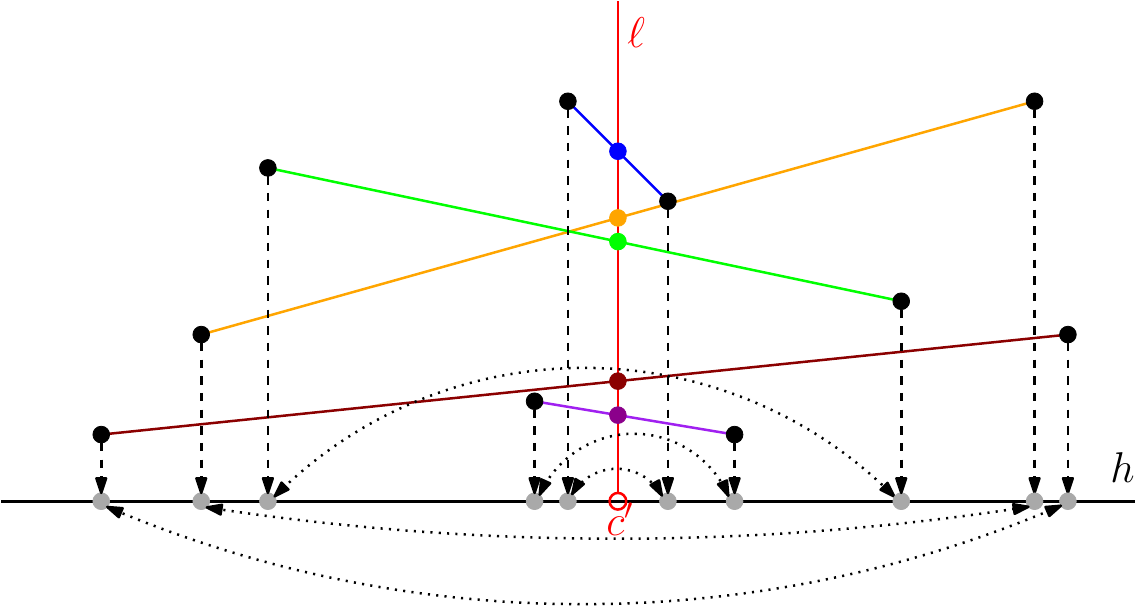}
   } 
  \subfigure[find median of intersections and combine]{ 
    \includegraphics[scale=.4]{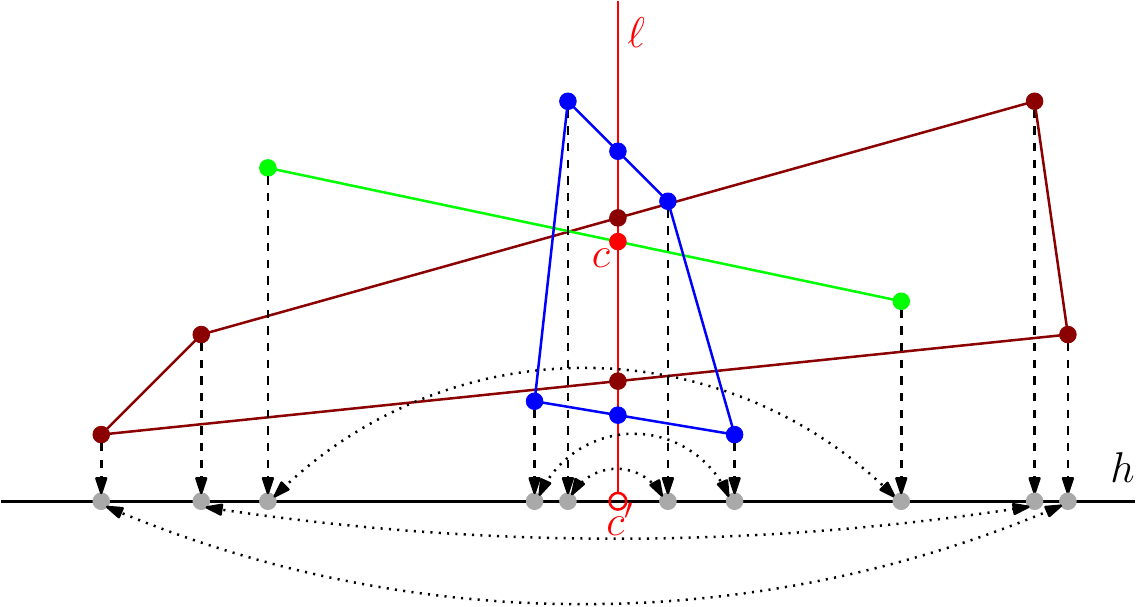}
   } 
 \end{center}
     \caption{Illustrating the lifting lemma in the plane: we project the point
       set $P$ to the line $h$ and find a Tverberg partition and a Tverberg
       point $c'$ for the projection. Then, we construct the line $\ell$ through 
       $c'$ that is perpendicular to $h$, and we take the intersection 
       with the lifted convex hulls of the Tverberg partition.
       We then find the median $c$ and the corresponding partition for the 
       intersections along $\ell$. Finally, we group the points according to this
       partition.}        
     \label{fig:Example} 
\end{figure}

\begin{lem}\label{lem:lifting}  
Let $P$ be a set of $n$ points in $\R^d$, and let
$h$ be a hyperplane in $\R^d$. Let $c' \in h$ be a Tverberg point of 
depth $r$ for the projection of $P$ onto $h$, with pruned partition 
$P_1, \dots, P_r$. Then we can find a Tverberg 
point $c \in \R^d$ of depth $\lceil r/2 \rceil$ for $P$ and a corresponding
Tverberg partition in time $O(dn)$. 
\end{lem}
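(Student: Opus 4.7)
The plan is to exploit the one-dimensional structure induced by the line $\ell$ through $c'$ perpendicular to $h$. Since $c' \in \ch(\pr_h(P_i))$ for each $i$, the input already supplies a convex combination $c' = \sum_{p \in P_i} \alpha_p^{(i)} \pr_h(p)$. For each $P_i$ I would first form $m_i = \sum_{p \in P_i} \alpha_p^{(i)} p$; since $\pr_h$ is linear, $\pr_h(m_i) = c'$, and hence $m_i \in \ell \cap \ch(P_i)$. Because the partition is pruned, $|P_i|\le d+1$, so each $m_i$ costs $O(d^2)$ time, and all representatives together cost $O(d^2 r)$. The pruning hypothesis gives $r(d+1)\le n$, so this fits into the promised $O(dn)$ budget.

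Next, I would parametrize $\ell$ by signed distance from $h$, call this coordinate $f$, and take $c \in \ell$ to be the point whose $f$-value is the median of $f(m_1),\dots,f(m_r)$; this requires $O(r)$ time with a linear-time selection algorithm. The median yields a natural pairing: split $\{1,\dots,r\}$, possibly setting aside one index realizing the median, into two equal-size sets, one containing the indices with $f(m_i)\le f(c)$ and the other with $f(m_j)\ge f(c)$, and then pair indices between the two sets arbitrarily. This produces $\lfloor r/2\rfloor$ pairs and, when $r$ is odd, an additional singleton group $\{k\}$ where $m_k = c$, for a total of $\lceil r/2\rceil$ groups.

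For every pair $(i,j)$ I set $Q = P_i\cup P_j$. Since $c$ lies between $m_i$ and $m_j$ along $\ell$, there is $\lambda\in[0,1]$ with $c = \lambda m_i + (1-\lambda) m_j$. Substituting the convex combinations of $m_i$ and $m_j$ and invoking the observation on composing convex combinations yields an explicit convex combination of $c$ in terms of $Q$, certifying $c\in\ch(Q)$; for the odd leftover, $c = m_k \in \ch(P_k)$ is immediate. Each combined combination has at most $2(d+1)$ entries and is assembled in $O(d)$ arithmetic operations, so this step adds only $O(dr)$ time. All resulting groups have size at most $2(d+1)$, so if a pruned output is required, one application of Lemma~\ref{lem:pruning} per group still fits in the time budget.

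The conceptual core is the reduction of the lifting step to a one-dimensional median argument along $\ell$: merging two sets allows $c$ to lie anywhere between their respective representatives $m_i, m_j$, so the classical one-dimensional Tverberg pairing around the median immediately yields $\lceil r/2\rceil$ groups. I do not expect a serious obstacle; the one point that needs attention is the running time, which relies on the pruned hypothesis $r(d+1)\le n$ to absorb the $O(d^2 r)$ costs of producing the $m_i$'s and of writing down the output combinations into $O(dn)$.
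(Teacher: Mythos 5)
Your proof is correct and follows essentially the same route as the paper's: compute the representatives $m_i = \sum_{p\in P_i}\alpha_p^{(i)} p$ on the line $\ell$, take the median $c$ along $\ell$, and form groups by pairing one representative on each side of $c$ (the paper phrases this as an ``appropriate numbering'' of the $x_i$; your explicit left-half/right-half matching is a valid instance of it). Your accounting of the $O(d^2 r)=O(dn)$ cost via the pruned hypothesis $r(d+1)\le n$ is actually a bit more careful than the paper, which glosses this step as $O(n)$.
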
 

\begin{proof} 
For every point $p \in P$, let $\pr(p)$ denote the projection of
$p$ onto $h$, and for every $Q \subseteq P$, let $\pr(Q)$ be the projections
of all the points in $Q$. Let $P_1, \dots, P_r \subseteq P$ 
such that $\pr(P_1), \dots, \pr(P_r)$ is a pruned partition
for $\pr(P)$ with Tverberg point $c'$.
Let $\ell$ be the line through $c'$ orthogonal to
$h$. 

Since our assumption implies $c' \in \ch(\pr(P_i))$ for $i=1, \dots, r$, it 
follows that $\ell$ intersects each $\ch(P_i)$ at some point 
$x_i \in \R^d$. More precisely, as we have a convex combination 
$c' = \sum_{p \in P_i} \alpha_p \pr(p)$ for each $P_i$, we 
simply get $x_i = \sum_{p \in P_i} \alpha_p p$.

Assuming an appropriate numbering, let 
$\widehat Q_i = \left\{ x_{2i - 1}, x_{2i} \right\}$, 
$i = 1, \dots, \lceil r/2 \rceil$, be a Tverberg partition 
of $x_1, \dots, x_r$. (If $r$ is odd, the set $\widehat Q_{\lceil r/2 \rceil}$ contains 
only one point, the median.) Since the points $x_i$ lie on the 
line $\ell$, such a Tverberg partition exists and can be computed 
in time $O(r)$ by finding the median $c$, i.e., the element of rank 
$\lceil r/2 \rceil$, according to the order along $\ell$ (see 
Cormen~\etal~\cite[Chapter~9]{CormenLeRiSt09}).
We claim that $c$ is a Tverberg point for $P$ of depth 
$\lceil r/2 \rceil$.  Indeed, we have 
\[ 
c \in \ch(\widehat Q_i) = 
\ch(\left\{ x_{2i - 1}, x_{2i} \right\}) \subseteq \ch(P_{2i - 1} \cup P_{2i}),
\] 
for $1 \leq i \leq \lceil r/2 \rceil$. Thus, if we set 
$Q_i := P_{2i - 1} \cup P_{2i}$, then
$Q_1, \dots, Q_{\lceil r/2 \rceil}$ is a Tverberg partition
for the point $c$.
The total time to compute $c$ and the $Q_i$ is $O(n)$,
as claimed. See Figure~\ref{fig:Example} for a two-dimensional
illustration of the lifting argument.
\qed\end{proof}

\noindent
Theorem~\ref{thm:simpleAlgorithm} is now a direct consequence
of Lemma~\ref{lem:lifting}.

\begin{thm}[Thm~\ref{thm:simpleAlgorithm}, restated]
Let $P$ be a set of $n$ points in
$\R^d$ in general position. One can compute a Tverberg point of depth 
$\lceil n/2^d \rceil$ for $P$ and the corresponding partition in 
time $d^{O(1)} n$.  
\end{thm}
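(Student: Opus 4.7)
The plan is to prove Theorem~\ref{thm:simpleAlgorithm} by induction on the dimension $d$, using the lifting lemma (Lemma~\ref{lem:lifting}) as the inductive step. For the base case $d = 1$, I would simply compute the median of $P$ in $O(n)$ time (by the linear-time selection algorithm). The median witnesses a pruned Tverberg partition of depth $\lceil n/2 \rceil$: pair up the points in sorted order around the median, with the median itself (and possibly a singleton) forming additional blocks. General position on the line ensures the median is well-defined and distinct from all other points.

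For the inductive step, suppose the claim holds in dimension $d-1$. Given $P \subseteq \R^d$, I would pick an arbitrary hyperplane $h$ (say $x_d = 0$) and form the projection $\pr(P)$, which lies in a copy of $\R^{d-1}$. By the inductive hypothesis, I can compute a Tverberg point $c' \in h$ of depth $r \geq \lceil n/2^{d-1} \rceil$ for $\pr(P)$ together with a pruned partition $\pr(P_1),\dots,\pr(P_r)$ in time $(d-1)^{O(1)} n$. Applying Lemma~\ref{lem:lifting} lifts $c'$ to a Tverberg point $c \in \R^d$ of depth $\lceil r/2 \rceil \geq \lceil n/2^d \rceil$ for $P$, with a corresponding partition $Q_1,\dots,Q_{\lceil r/2\rceil}$ where each $Q_i = P_{2i-1} \cup P_{2i}$. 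This step costs $O(dn)$. Finally, I would invoke Lemma~\ref{lem:pruning} on each $Q_i$ to obtain a pruned partition in $\R^d$; since the total number of points summed across the $Q_i$ is at most $n$, the total pruning cost at this level is $O(d^3 n)$. Maintaining a pruned partition after each lifting is essential because the next application of Lemma~\ref{lem:lifting} one level up requires the input partition to be pruned.

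For the running time, the recurrence $T(d,n) = T(d-1,n) + O(d^3 n)$ unfolds to $T(d,n) = O(d^4 n) = d^{O(1)} n$, as desired. The depth bound follows directly: starting from depth $\lceil n/2 \rceil$ in dimension $1$ and halving $d - 1$ times via the lifting lemma gives depth at least $\lceil n/2^d \rceil$.

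The only mildly subtle point, and the place where the general position hypothesis is used, is to guarantee that the intermediate one-dimensional instances encountered during the recursion have well-defined medians and that projections onto coordinate hyperplanes preserve the combinatorial structure we need (in particular, the $x_i$ produced along the line $\ell$ in Lemma~\ref{lem:lifting} are distinct, so a clean median-based pairing exists). Beyond that, the argument is a straightforward recursive assembly of the lifting and pruning primitives, and I do not foresee any genuine obstacle.
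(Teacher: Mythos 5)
Your proof is essentially the same as the paper's: induction on dimension with the base case a one-dimensional median, the inductive step projecting onto $x_d=0$, applying Lemma~\ref{lem:lifting}, and then pruning via Lemma~\ref{lem:pruning} before recursing, with the same recurrence $T(d,n)\le T(d-1,n)+d^{O(1)}n$. One very minor remark: the general-position hypothesis is not actually needed for the median along $\ell$ or the pairing to make sense (ties can be broken arbitrarily), so the role you attribute to it is not required; the paper's proof likewise does not invoke it.
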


\begin{proof}
If $d = 1$, we obtain a Tverberg point and a corresponding
partition by finding the median $c$ of $P$~\cite{CormenLeRiSt09} and pairing 
each point to the left of $c$ with exactly one point to the right 
of $c$.

If $d > 1$, we project $P$ onto the hyperplane $x_d = 0$.
This gives an $n$-point set $P' \subseteq \R^{d-1}$. We 
recursively find a point of depth 
$\lceil n/{2^{d-1}} \rceil$ and a corresponding pruned partition
for $P'$. We then apply 
Lemma~\ref{lem:lifting} to get a point $c \in \R^d$ of depth
$r = \left\lceil \lceil n/{2^{d-1}} \rceil / 2 \right\rceil \geq \lceil n/2^d \rceil$ 
for $P$,
together with a partition. Each set has at
most $2d$ points, so by applying Lemma~\ref{lem:pruning}  to
each set, it takes 
$O(d^4r)$ time to prune all sets.

This yields a total running time of $T_d(n) \leq T_{d-1}(n) + d^{O(1)}n$, 
which implies the result.
\qed\end{proof}

\noindent
In particular, Theorem~\ref{thm:simpleAlgorithm} gives a weak version of 
Tverberg's theorem with a 
simple proof.

\begin{cor}[Weak Tverberg theorem] 
Let $P$ be a set of $n$ points in $\R^d$.
Then $P$ can be partitioned into $\lceil n/2^d \rceil$ sets 
$P_1, \dots, P_{\lceil n/2^d \rceil}$ such that
\[ 
\bigcap_{i=1}^{\lceil n/2^d \rceil} \ch(P_i) \neq \emptyset. 
\] \qed
\end{cor}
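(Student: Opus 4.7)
The plan is to observe that this corollary is essentially a repackaging of Theorem~\ref{thm:simpleAlgorithm}: the algorithm of that theorem does not merely produce a Tverberg point, it also outputs an explicit partition of $P$ witnessing the claimed depth. Therefore I would simply apply Theorem~\ref{thm:simpleAlgorithm} to $P$ and let $c$ and $P_1,\dots,P_{\lceil n/2^d\rceil}$ be the returned point and partition; by definition of Tverberg depth we have $c\in\bigcap_i\ch(P_i)$, so the intersection is nonempty.

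The only subtlety is that Theorem~\ref{thm:simpleAlgorithm} is stated under a general position hypothesis, while the corollary is not. I would remove this assumption by a standard perturbation argument: pick a sequence of generic perturbations $P^{(k)}\to P$ placing each $P^{(k)}$ in general position, apply Theorem~\ref{thm:simpleAlgorithm} to each $P^{(k)}$ to obtain a partition $\pi_k$ of the index set $\{1,\dots,n\}$ into $\lceil n/2^d\rceil$ classes with a common point in the intersection of the corresponding convex hulls. Since there are only finitely many set partitions, some fixed partition $\pi$ occurs for infinitely many $k$; pass to that subsequence. The witnessing points $c^{(k)}$ can be taken bounded (they lie in the bounded set $\ch(P^{(k)})$), so by compactness they have an accumulation point $c$, and by continuity of the convex hull operation under perturbation of the vertices, $c$ lies in the intersection of the convex hulls of the classes of $\pi$ in $P$.

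I do not anticipate any real obstacle: the corollary is an immediate consequence of the algorithmic theorem, and the general position assumption is removed by a routine limiting argument. If one prefers to avoid even this, an alternative is to observe that the proof of Theorem~\ref{thm:simpleAlgorithm} itself never really used general position — the recursion on dimension, the lifting lemma, and the pruning lemma all go through for arbitrary point sets (pruning only requires the existence of an affine dependence among any $d+2$ points, which always holds).
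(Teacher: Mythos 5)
Your proposal is correct and follows essentially the same route as the paper, which states the corollary with an immediate \qed as a direct consequence of Theorem~\ref{thm:simpleAlgorithm}. The one thing you do beyond the paper is address the mismatch in hypotheses: the theorem assumes general position while the corollary does not, and the paper silently drops this assumption. Your perturbation-and-compactness argument is a valid way to bridge that gap, and your closing remark is the sharper observation: the general position hypothesis in Theorem~\ref{thm:simpleAlgorithm} is in fact never used --- median selection, the lifting step of Lemma~\ref{lem:lifting}, and the pruning of Lemma~\ref{lem:pruning} (which only needs \emph{some} affine dependence among any $d+2$ points, always available) all work for arbitrary point sets, so the hypothesis could simply be deleted from the theorem. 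Either justification is fine; the second is cleaner and avoids the limiting argument entirely.
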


\subsection{An Improved Approximation Factor}

In order to improve the approximation factor, we will now use an easy 
bootstrapping approach.
A Tverberg partition of depth $r$ in $\R^d$ needs only $(d+1)r$ points. 
This means that after finding
a point of depth $n/2^d$, we still have
$n\left(1 - (d+1)/2^d\right)$ unused points at our disposal. 
The next lemma shows how to leverage these points to achieve an even
higher Tverberg depth.

\begin{lem}\label{lem:collection} 
Let $\rho \geq 2$ and $q(m,d)$ be a function such that
for any $m$-point set $Q \subseteq \R^d$ we can compute a 
point of depth $\lceil m/\rho \rceil$ and a corresponding 
pruned partition in time $q(m, d)$. 

Let $P \subseteq \R^d$ with $|P|=n$, and let $\beta \in [2, n/\rho]$ be 
a constant. Define the target depth $\delta$ as $\delta \eqdef \lceil n/\beta\rho \rceil$.
Then we can find $\alpha := \lceil \frac{n(1 - 1/\beta)}{\delta (d+1)} \rceil$ 
disjoint subsets $Q_1, \dots, Q_\alpha$ of $P$ such that for each $Q_i$ we have 
a Tverberg point $c_i$ of 
depth $\delta$ and a pruned partition $\mathcal Q_i$. 
This takes total time 
\[
O\Biggl(\frac{(\beta-1)\rho \,q(n,d)}{d+1}
\Biggr).
\] 
\end{lem}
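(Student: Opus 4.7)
My plan is to apply the hypothesized algorithm iteratively, carving off one depth-$\delta$ subset in each round. First I would set $R_1 := P$, and for $i = 1, 2, \dots, \alpha$ I would run the given algorithm on $R_i$ to obtain a Tverberg point $c_i$ together with a pruned partition of $R_i$ into $\lceil |R_i|/\rho \rceil$ parts. Provided $\lceil |R_i|/\rho \rceil \geq \delta$, I would pick any $\delta$ of these parts; their union $Q_i$ then has at most $\delta(d+1)$ points and inherits $c_i$ as a Tverberg point of depth $\delta$ with a pruned witness partition $\mathcal{Q}_i$ (the restriction of the returned partition to $Q_i$). Setting $R_{i+1} := R_i \setminus Q_i$ makes the $Q_i$ pairwise disjoint automatically.

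The hard part will be showing that this procedure actually survives $\alpha$ rounds, i.e., that the depth condition $\lceil |R_i|/\rho\rceil \geq \delta$ holds for every $i \leq \alpha$. Since each round removes at most $\delta(d+1)$ points, we have $|R_i| \geq n - (i-1)\delta(d+1)$. With $\delta = \lceil n/(\beta\rho)\rceil$, the required inequality $\lceil |R_i|/\rho\rceil \geq \delta$ is implied by $|R_i| \geq n/\beta$, i.e., by $(i-1)\delta(d+1) \leq n(1 - 1/\beta)$. This is exactly what the definition of $\alpha$ guarantees, after a careful look at the ceiling: one checks that $\alpha - 1 \leq n(1-1/\beta)/(\delta(d+1))$, and hence the bound holds for every $i \leq \alpha$. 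Tracking the ceilings correctly at this step is the main delicate point.

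It remains to analyze the time. Each invocation of the black-box algorithm costs $q(|R_i|, d) \leq q(n, d)$, and the bookkeeping to extract $\delta$ parts and update $R_i$ is subsumed. Using $\delta \geq n/(\beta\rho)$ I would then estimate
\[
\alpha \;\leq\; \frac{n(1-1/\beta)}{\delta(d+1)} + 1 \;\leq\; \frac{(\beta-1)\rho}{d+1} + 1,
\]
which is $O\bigl((\beta-1)\rho/(d+1)\bigr)$ in the regime where the lemma will actually be applied, yielding the claimed total time $O\bigl((\beta-1)\rho \, q(n,d)/(d+1)\bigr)$.
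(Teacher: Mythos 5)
Your proof is correct and follows essentially the same iterative carving argument as the paper: run the black box, extract a depth-$\delta$ chunk supported on at most $\delta(d+1)$ points, remove it from the pool, and repeat, with the count of rounds controlled by the definition of $\alpha$. The only (cosmetic) difference is that the paper applies the black-box algorithm to an arbitrary fixed-size subset $P_i'$ of exactly $\lceil n/\beta\rceil$ remaining points in each round, which makes the depth $\lceil\lceil n/\beta\rceil/\rho\rceil\geq\delta$ automatic, whereas you run it on all of $R_i$ and verify $|R_i|\geq n/\beta$; both yield the same bound on the number of rounds and the same running time up to the monotonicity of $q$ already implicit in the paper.
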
 

\begin{proof}
Let $P_1 \eqdef P$.
We take an arbitrary subset $P_1' \subseteq P_1$ with $\lceil n/\beta \rceil$ 
points and find a Tverberg point $c_1$ of depth $\delta$ and a corresponding
pruned partition $\mathcal{Q}_1$ for $P'_1$. 
This takes time $q(n,d)$, and the set 
$Q_1 \eqdef \bigcup_{Z \in \mathcal{P}_1} Z$ contains at most $\delta(d+1)$
points. Set $P_2 \eqdef P_1 \setminus Q_1$ and repeat. 
The resulting sets $Q_i$ are
pairwise disjoint, and we can repeat this process until
\[
	n - i \delta (d+1) <  \frac{n}{\beta}.
\]
This gives
\[
	\alpha \geq i > \left\lceil \frac{n(1 - 1/\beta)}{\delta (d+1)} \right\rceil.
\]
Thus, we obtain $\alpha$ points $c_1, \dots, c_\alpha$ with corresponding 
Tverberg partitions $\mathcal{Q}_1, \dots, \mathcal{Q}_\alpha$, each of 
depth at least $\lceil n/\beta\rho \rceil$, as desired.
\qed\end{proof}

For example, by Theorem~\ref{thm:simpleAlgorithm} we can find a point of depth 
$\lceil n/2^d \rceil$ and a corresponding pruned partition in time $d^{O(1)}n$. Thus,
by applying Lemma~\ref{lem:collection} with $c = 2$, $\rho = 2^d$, we can also 
find $\lceil n/(2\lceil n/2^{d+1} \rceil(d+1)) \rceil \approx  2^d/(d+1) $ 
points of depth $\lceil n/2^{d+1} \rceil$ in linear time.

In order to make use of Lemma~\ref{lem:collection}, we will also need a 
lemma that describes how we can combine these points in order to increase 
the total depth. This generalizes a similar lemma by Miller and 
Sheehy~\cite[Lemma~4.1]{MillerSh10}.  

\begin{lem}\label{lem:combine}
Let $P$ be a set of $n$ points in
$\R^d$, and let $P = \biguplus_{i=1}^\alpha P_i$ be a partition of $P$. 
Furthermore, suppose that for each $P_i$ we have a Tverberg point 
$c_i \in \R^d$ of depth $r$, together with a corresponding pruned
partition $\mathcal{P}_i$. Let 
$C \eqdef \{ c_i \mid 1 \leq i \leq \alpha\}$ and $c$ be a 
point of depth $r'$ for $C$, with corresponding pruned 
partition $\mathcal{C}$. Then $c$ is a point of depth $r
r'$ for $P$.  Furthermore, we can find a corresponding pruned
partition in time $d^{O(1)}n$.
\end{lem}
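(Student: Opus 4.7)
The plan is to build, for each pair $(j,k)$ with $j \in \{1,\dots,r'\}$ and $k \in \{1,\dots,r\}$, a subset $S_{j,k} \subseteq P$ whose convex hull contains $c$, and then to prune each $S_{j,k}$ with Lemma~\ref{lem:pruning} so that every set has at most $d+1$ points. Since $r r'$ disjoint sets are produced, this is the promised pruned partition witnessing depth $r r'$.

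Concretely, write $\mathcal{C} = \{C_1,\dots,C_{r'}\}$ and, for each $i$, write $\mathcal{P}_i = \{P_{i,1},\dots,P_{i,r}\}$. I would then define
\[
  S_{j,k} \eqdef \bigcup_{c_i \in C_j} P_{i,k}, \qquad j \in \{1,\dots,r'\},\ k \in \{1,\dots,r\}.
\]
Because $\mathcal{C}$ is a partition of (a subset of) $C$, each $c_i$ lies in at most one $C_j$; because the $P_i$ partition $P$ and each $\mathcal{P}_i$ is itself a pruned partition of $P_i$, each point of $P$ lies in at most one $P_{i,k}$. Combining these two observations shows that the $r r'$ sets $S_{j,k}$ are pairwise disjoint subsets of $P$.

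Next I would verify $c \in \ch(S_{j,k})$. By hypothesis we have a convex combination $c = \sum_{c_i \in C_j} \beta_i c_i$ coming from $\mathcal{C}$, and for each $c_i \in C_j$ a convex combination $c_i = \sum_{p \in P_{i,k}} \alpha_{i,p}\, p$ coming from $\mathcal{P}_i$. Substituting and invoking the convex-combination observation from Section~\ref{sec:simpleAlgorithm} yields
\[
  c \;=\; \sum_{c_i \in C_j}\ \sum_{p \in P_{i,k}} \beta_i \alpha_{i,p}\, p,
\]
a convex combination of the points in $S_{j,k}$. Hence $c$ has Tverberg depth at least $r r'$ for $P$, with witnessing sets $\{S_{j,k}\}$.

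Finally, for the running-time bound, the sets $S_{j,k}$ are disjoint subsets of $P$, so $\sum_{j,k} |S_{j,k}| \leq n$. Applying Lemma~\ref{lem:pruning} to each $S_{j,k}$ produces an equivalent set of size at most $d+1$ (with an updated convex combination for $c$) in time $O(d^3 |S_{j,k}|)$; summing over all pairs $(j,k)$ gives total pruning time $O(d^3 n) = d^{O(1)} n$. Building the $S_{j,k}$'s and the composed convex combinations is also $O(d n)$, so the overall cost is $d^{O(1)} n$, as claimed. The only mildly delicate point is the bookkeeping in the disjointness argument and in the substitution of convex combinations; everything else is routine.
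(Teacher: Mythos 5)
Your proposal is correct and follows essentially the same approach as the paper: you form the same product sets (the paper calls them $Z_{ab} = \bigcup_{c_i \in D_a} Q_{ib}$, you call them $S_{j,k}$), verify $c \in \ch(S_{j,k})$ by composing convex combinations (the paper phrases this as a chain of hull inclusions, which is the same argument), and prune each set via Lemma~\ref{lem:pruning}. The only cosmetic difference is in the time bound: you sum $O(d^3 |S_{j,k}|)$ over all sets using $\sum |S_{j,k}| \le n$, while the paper bounds each $|Z_{ab}| \le (d+1)^2$ and multiplies by the number of sets; both give $d^{O(1)}n$.
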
 

\begin{proof}
For $i = 1, \dots, \alpha$, write $\mathcal{P}_i = \{Q_{i1}, \dots, Q_{ir}\}$,
and write $\mathcal{C} = \{D_1, \dots, D_{r'}\}$. For $a = 1, \dots, r'$, 
$b = 1, \dots, r$, we define sets $Z_{ab}$ as 
\[
Z_{ab} \eqdef \bigcup_{c_i \in D_a} Q_{ib}.
\]
 We claim that the set $\mathcal{Z} \eqdef
\{Z_{ab} \mid a = 1, \dots, r'; b = 1, \dots, r\}$ is a Tverberg
partition of depth $r r'$ for $P$ with Tverberg point $c$.
By definition, $\mathcal{Z}$ is a partition with the appropriate
number of elements. It only remains to check that $c \in \ch(Z_{ab})$ for
each $Z_{ab}$. Indeed, we have
\[
c \in \ch(D_a) = \ch\left(\bigcup_{c_i \in D_a} \left\{ c_i \right\} \right) 
\subseteq \ch\Bigl(\bigcup_{c_i \in D_a} \ch\left(Q_{ib}\right)\Bigr)  
= \ch\Bigl(\bigcup_{c_i \in D_a} Q_{ib}\Bigr) = \ch(Z_{ab}), 
\]
for $a = 1 \dots r', b = 1 \dots r$.

As the partitions $\mathcal P_i$ and $\mathcal C$ were pruned, 
each $Z_{ab}$ consists of at most $(d+1)^2$ points. Thus, by 
Lemma~\ref{lem:pruning}, each $Z_{ab}$ can be pruned in time 
$O(d^5)$. Since $|\mathcal{Z}| \leq  n$, the lemma 
follows.
\qed\end{proof}

\noindent
Combining Lemmas~\ref{lem:collection} and~\ref{lem:combine}, we can now prove
Theorem~\ref{thm:betterTverberg}.

\begin{theorem}[Thm~\ref{thm:betterTverberg}, restated]
Let $P$ be a set of $n$ points in $\R^d$. Then one can find a 
Tverberg point of depth $\lceil n/2(d+1)^2 \rceil$ and
a corresponding partition in time $f(2^{d+1}) + d^{O(1)} n$, where $f(m)$ is 
the time to compute a Tverberg point of depth $\lceil m/(d+1) \rceil$ for 
$m$ points by brute force, together with an associated Tverberg partition.
\end{theorem}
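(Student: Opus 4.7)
The plan is to cascade the three building blocks already developed: use Theorem~\ref{thm:simpleAlgorithm} as a cheap ``subroutine'' inside Lemma~\ref{lem:collection} to harvest many moderate-depth Tverberg points, run the brute-force exact Tverberg algorithm on this small collection, and then use Lemma~\ref{lem:combine} to multiply the two depths.

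Concretely, I would first invoke Lemma~\ref{lem:collection} with $\rho = 2^d$ and $\beta = 2$, using the algorithm of Theorem~\ref{thm:simpleAlgorithm} (which runs in $q(n,d) = d^{O(1)}n$) as the depth-$\lceil m/\rho\rceil$ subroutine. This produces $\alpha$ pairwise disjoint subsets $Q_1,\dots,Q_\alpha$ of $P$, each equipped with a Tverberg point $c_i \in \R^d$ of depth $\delta = \lceil n/2^{d+1}\rceil$ and a corresponding pruned partition $\mathcal Q_i$. A direct count from the statement of Lemma~\ref{lem:collection} gives $\alpha \geq (n/2)/(\delta(d+1))$ as well as $\alpha \leq 2^{d+1}$, and the total cost of this step is $d^{O(1)}n$.

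Next, I would feed the $\alpha$-point set $C := \{c_1,\dots,c_\alpha\} \subseteq \R^d$ into the brute-force Tverberg algorithm. Since $\alpha \leq 2^{d+1}$, this takes time at most $f(2^{d+1})$ and, by Tverberg's theorem, returns a point $c$ of depth $r' = \lceil \alpha/(d+1)\rceil$ together with a Tverberg partition $\mathcal C$ of $C$; this partition is automatically pruned because each of its $r'$ blocks has size $d+1$. Lemma~\ref{lem:combine} then certifies that $c$ is a Tverberg point of depth $\delta \cdot r'$ for the set $P' := \bigcup_i Q_i$, and it yields a corresponding pruned partition in time $d^{O(1)}n$. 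To deliver a partition of all of $P$ rather than $P'$, I simply drop the at most $n/2$ unused points into one of the blocks arbitrarily; this cannot decrease any convex hull and hence preserves the Tverberg depth.

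Finally, I would verify the claimed bound by a short chain of inequalities:
\[
\delta \cdot r' \;\geq\; \delta \cdot \frac{\alpha}{d+1} \;\geq\; \delta \cdot \frac{n/2}{\delta(d+1)^2} \;=\; \frac{n}{2(d+1)^2},
\]
so, depth being an integer, the depth is at least $\lceil n/2(d+1)^2 \rceil$. The total running time is $f(2^{d+1}) + d^{O(1)}n$, as required. There is no real obstacle here; the only point requiring mild care is to choose $\beta$ so that the ``$\alpha$ Tverberg points in $\R^d$'' handed to brute force have size at most $2^{d+1}$ while the product $\delta\cdot(\alpha/(d+1))$ still matches the target depth $n/(2(d+1)^2)$. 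The choice $\beta=2$ balances these two constraints exactly, and the boundary case $n < 2^{d+1}$ is handled by running brute force on $P$ directly within the allowed budget.
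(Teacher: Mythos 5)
Your proposal is correct and follows essentially the same route as the paper: Lemma~\ref{lem:collection} with $\rho = 2^d$, $\beta = 2$ (using Theorem~\ref{thm:simpleAlgorithm} as the subroutine), then brute force on the collected centers $C$ (whose size is bounded by $2^{d+1}$), then Lemma~\ref{lem:combine}, with the same arithmetic verification. Your explicit handling of the leftover points of $P$ that do not end up in any $Q_i$ is a small but welcome detail that the paper's write-up leaves implicit.
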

\begin{proof}
If $n \leq 2^{d+1}$, we solve the problem by brute-force in 
$f(2^{d+1})$ time.
Otherwise, we apply Lemma~\ref{lem:collection} with $c = 2$ and $\rho = 2^d$ 
to obtain a set $C$ of 
\[ 
|C|  = \left\lceil \frac{n}{2\lceil n/2^{d+1} \rceil(d+1)} \right\rceil 
\] 
points of depth $\lceil n/2^{d+1} \rceil$ for $P$  with corresponding pruned 
partitions in time $d^{O(1)}n$.
We then use the brute-force algorithm to get a Tverberg point
for $C$ with depth $\lceil |C|/(d+1) \rceil$ and a corresponding partition, in time
$f(|C|)$. Finally, we apply Lemma~\ref{lem:combine}
to obtain a Tverberg point and corresponding partition in  time $d^{O(1)} n$.
Using that $\lceil a \lceil b \rceil\rceil \geq \lceil a b \rceil$ and
$\lceil a \rceil \lceil b \rceil \geq \lceil a \lceil b \rceil \rceil$ for $a,b \geq 0$, 
we get that the resulting depth is
\[
	\left\lceil \frac{n}{2^{d+1}} \right\rceil \cdot 
	\left\lceil \frac{|C|}{d+1} \right\rceil 
	\geq  \left\lceil \left\lceil \frac{n}{2^{d+1}} \right\rceil 
	\frac{n}{2\lceil n/2^{d+1} \rceil(d+1)^2} \right\rceil
  = \left\lceil \frac{n}{2(d+1)^2} \right\rceil,
\]
and the total running time is $f(2^{d+1}) + d^{O(1)}n$, as desired.
\qed\end{proof}

Instead of brute force, we can also use
the algorithm by Miller and Sheehy to find a point among the deep 
points. This gives a worse depth, but it is slightly faster.

\begin{theorem}\label{thm:betterTverberg-withMS} 
Let $P$ be a set of $n$ points in $\R^d$. Then one can compute a 
Tverberg point of depth $\lceil n/4(d+1)^3 \rceil$ and
a corresponding partition in time $2^{O(d \log d)} + d^{O(1)}n$.
\end{theorem}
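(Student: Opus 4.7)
The plan is to mirror the proof of Theorem~\ref{thm:betterTverberg}, but replace the brute-force subroutine on the intermediate point set by the Miller-Sheehy algorithm, which produces a deeper Tverberg point at the cost of a worse depth guarantee than brute force. First, I would handle a base case: if $n$ is at most some threshold polynomial in $2^d$, I would just invoke Miller-Sheehy directly on $P$; the running time is then absorbed into the $2^{O(d\log d)}$ term. Otherwise I assume $n$ is large enough that the collection step of Lemma~\ref{lem:collection} yields a useful set $C$.

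Next, I would apply Lemma~\ref{lem:collection} with $\beta = 2$ and $\rho = 2^d$, using Theorem~\ref{thm:simpleAlgorithm} as the underlying subroutine (so $q(n,d) = d^{O(1)} n$). This produces disjoint subsets of $P$ together with a set $C$ of
\[
|C| = \left\lceil \frac{n}{2\lceil n/2^{d+1}\rceil(d+1)} \right\rceil = O(2^d/(d+1))
\]
points, each of depth $\lceil n/2^{d+1}\rceil$ in $P$, equipped with the associated pruned partitions. The time used so far is $d^{O(1)} n$.

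Then I would run the Miller-Sheehy algorithm on the much smaller set $C$ to obtain a Tverberg point $c$ for $C$ of depth $\lceil |C|/2(d+1)^2 \rceil$, together with a pruned partition. Since the Miller-Sheehy algorithm runs in time $m^{O(\log d)}$ on $m$ points and $|C| = 2^{O(d)}$, this step takes time $|C|^{O(\log d)} = 2^{O(d \log d)}$. Finally, I would feed the deep points in $C$ and their partitions, together with $c$ and its partition of $C$, into Lemma~\ref{lem:combine}. The resulting depth in $P$ is
\[
\left\lceil \frac{n}{2^{d+1}} \right\rceil \cdot \left\lceil \frac{|C|}{2(d+1)^2} \right\rceil
\geq \left\lceil \left\lceil \frac{n}{2^{d+1}} \right\rceil \cdot \frac{n}{2\lceil n/2^{d+1}\rceil(d+1) \cdot 2(d+1)^2} \right\rceil
= \left\lceil \frac{n}{4(d+1)^3} \right\rceil,
\]
using the same ceiling manipulations $\lceil a\lceil b\rceil\rceil \geq \lceil ab\rceil$ and $\lceil a\rceil\lceil b\rceil \geq \lceil a\lceil b\rceil\rceil$ as in the previous theorem. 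Lemma~\ref{lem:combine} itself adds only $d^{O(1)}n$ time for pruning the combined partition, giving the claimed total runtime $2^{O(d\log d)} + d^{O(1)} n$.

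I do not anticipate a real obstacle here, since all heavy lifting is already encapsulated in Lemmas~\ref{lem:collection} and \ref{lem:combine} and in the Miller-Sheehy black box; the only thing to watch is that the runtime of Miller-Sheehy is applied to $|C|$ rather than to $n$, which is exactly what makes the $d^{O(\log d)}$ factor on $n$ from the Miller-Sheehy algorithm collapse into a purely additive $2^{O(d\log d)}$ term. The slightly delicate arithmetic step is verifying the depth bound via the ceiling inequalities, but this is identical to the computation in the proof of Theorem~\ref{thm:betterTverberg}.
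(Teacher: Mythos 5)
Your proof follows the paper's argument step for step: handle small $n$ by running Miller--Sheehy directly, otherwise apply Lemma~\ref{lem:collection} with $\beta=2$, $\rho=2^d$ (using Theorem~\ref{thm:simpleAlgorithm} as the subroutine) to produce $C$, run Miller--Sheehy on the small set $C$, and finish with Lemma~\ref{lem:combine}, closing with the identical ceiling manipulation. The only cosmetic difference is that the paper fixes the base-case threshold to $n\leq 2^{d+1}$ where you leave it as ``polynomial in $2^d$'', but this has no effect on the argument.
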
 

\begin{proof}
For $n \leq 2^{d+1}$ we use the Miller-Sheehy algorithm to get a 
point of depth $\lceil n/2(d+1)^2\rceil$ in time $2^{O(d \log d)}$.
Otherwise, we proceed as in the proof of Theorem~\ref{thm:betterTverberg}
to obtain a set $C$ of 
\[ 
|C|  = \left\lceil \frac{n}{2\lceil n/2^{d+1} \rceil(d+1)} \right\rceil 
\] 
Tverberg points of depth
$\lceil n/2^{d+1}\rceil$ and corresponding pruned partitions in time $d^{O(1)}n$.
The Miller-Sheehy algorithm then gives a Tverberg point for $C$ of depth
$\lceil |C|/2(d+1)^2\rceil$ in time $|C|^{O(\log d)} = 2^{O(d \log d)}$.
Finally, we apply Lemma~\ref{lem:combine}. This takes time $d^{O(1)}n$ and yields
a Tverberg point and pruned partition of depth
\[
	\left\lceil \frac{n}{2^{d+1}} \right\rceil 
	\cdot \left\lceil \frac{|C|}{2(d+1)^2} \right\rceil 
	\geq  \left\lceil \left\lceil \frac{n}{2^{d+1}} \right\rceil 
	\frac{n}{4\lceil n/2^{d+1} \rceil(d+1)^3} \right\rceil
  = \left\lceil \frac{n}{4(d+1)^3} \right\rceil,
\]
as claimed.
\qed\end{proof}

\section{An Improved Running Time}\label{sec:improvedAlgorithm} 
The algorithm from the previous section runs in linear time for any 
fixed dimension, but the constants are huge. In this section, we show 
how to speed up our approach through an improved recursion,
and we obtain an algorithm with running time $d^{O(\log d)} n$ while 
losing a depth factor of $1/2(d+1)$.

\subsection{A More General Version of the Lifting Argument}
\label{sec:generalLiftingLemma} 

We first present a more general
version of the lifting argument in
Lemma~\ref{lem:lifting}. For this, we need some more notation. 
Let $P \subseteq \R^d$ be finite. A $k$-dimensional \emph{flat} 
$F \subseteq \R^d$ (often abbreviated as \emph{$k$-flat}) is defined as 
a $k$-dimensional affine subspace of $\R^d$ (or, equivalently,
as the affine hull of $k+1$ affinely independent points in $\R^d$). 
We call a $k$-dimensional flat $F \subseteq \R^d$ a \emph{Tverberg $k$-flat 
of depth $r$ for $P$} if there is a 
partition of $P$ into sets $P_1, \dots, P_r$ such that $\ch(P_i)
\cap F \neq \emptyset$ for all $i = 1, \dots, r$. This generalizes the 
notion of a Tverberg point.

\begin{lem}\label{lem:liftingExtended} 
Let $P$ be a set of $n$ points in $\R^d$, and let $h \subseteq \R^d$ be a
$k$-flat. 
Suppose we have a Tverberg point $c \in h$ of depth $r$ for 
$\pr(P) := \pr_h(P)$, as well as a corresponding Tverberg partition. 
Let $h^{\perp}_c$ be the $(d-k)$-flat orthogonal to $h$ that passes 
through $c$. Then $h^{\perp}_c$ is a Tverberg $(d-k)$-flat for
$P$ of depth $r$, with the same Tverberg partition.  
\end{lem}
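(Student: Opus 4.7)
The plan is to mimic the construction used in the proof of Lemma~\ref{lem:lifting}, but instead of intersecting the lifted convex hulls with a line (the $1$-flat orthogonal to a hyperplane), we intersect them with the $(d-k)$-flat $h_c^\perp$. The key observation is that the projection $\pr_h$ is an affine map, so convex combinations are preserved both ways: if the coefficients witness membership of $c$ in the convex hull of the projected set, then applying the same coefficients to the original points produces a preimage point whose projection is $c$.

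More concretely, I would proceed as follows. First, fix $i \in \{1,\dots,r\}$ and use the given convex combination
\[
  c \;=\; \sum_{p \in P_i} \alpha_p \, \pr_h(p), \qquad \alpha_p \geq 0, \quad \sum_{p \in P_i} \alpha_p = 1,
\]
which exists because $c \in \ch(\pr_h(P_i))$. Then define
\[
  x_i \;\eqdef\; \sum_{p \in P_i} \alpha_p \, p \;\in\; \ch(P_i).
\]
Since $\pr_h$ is affine and agrees with the above assignment of coefficients, we have $\pr_h(x_i) = c$. Consequently $x_i - c$ is orthogonal to the direction space of $h$, which means $x_i \in h_c^\perp$ by definition of $h_c^\perp$ (the unique $(d-k)$-flat through $c$ orthogonal to $h$).

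Thus $x_i \in \ch(P_i) \cap h_c^\perp$, so this intersection is nonempty for every $i = 1,\dots,r$. The partition $P_1, \dots, P_r$ therefore certifies that $h_c^\perp$ is a Tverberg $(d-k)$-flat of depth $r$ for $P$, exactly as claimed. There is no real obstacle here; the only subtlety worth flagging is that we must invoke affinity (not merely linearity) of the orthogonal projection onto a flat that need not pass through the origin, and that the preservation of the affine-combination coefficients under $\pr_h$ is what forces the lifted witness point $x_i$ to lie on $h_c^\perp$ rather than merely somewhere in $\ch(P_i)$.
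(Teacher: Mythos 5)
Your proposal is correct and follows essentially the same argument as the paper: you lift the same convex combination to obtain a point $x_i \in \ch(P_i)$ whose projection onto $h$ is $c$, which forces $x_i \in h_c^\perp$. The paper phrases this via the decomposition $p = \pr(p) + \pr^\perp(p)$ rather than by invoking affinity of $\pr_h$, but the two are the same computation.
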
 

\begin{proof} 
Let $\pr(P_1), \dots, \pr(P_r)$ be the Tverberg partition 
for the projection $\pr(P)$.
It suffices  to show that $\ch(P_i)$ intersects $h^{\perp}_c$ for 
$i = 1, \dots, r$. Indeed, for $P_i = \{p_{i1}, \dots, p_{il_i}\}$
let $c = \sum_{j=1}^{l_i} \lambda_j \pr(p_{ij})$ be a convex combination 
that witnesses $c \in \ch(\pr(P_i))$. We now write each 
$p_{ij} = \pr(p_{ij}) + \pr^{\perp}(p_{ij})$, where $\pr^{\perp}(\cdot)$ 
denotes the projection onto the orthogonal complement $h^{\perp}$ of $h$.
Then,
\[
\sum_{j=1}^{l_i} \lambda_j  p_{ij} =
\sum_{j=1}^{l_i} \lambda_j \pr(p_{ij}) + \sum_{j=1}^{l_i} \lambda_j
\pr^{\perp}(p_{ij}) \in c + h^{\perp} = h^{\perp}_c,
\] 
as claimed.
\qed\end{proof}

\noindent
Lemma~\ref{lem:liftingExtended} lets us use a good algorithm for 
\emph{any fixed dimension} 
to improve the general case. 

\begin{lem}\label{lem:lowDtoHighD} 
Let $\delta \geq 1$ be a fixed integer.
Suppose we have an algorithm $\mathcal{A}$ with the following property:
for every point set $Q \subseteq \R^\delta$, the algorithm 
$\mathcal{A}$ constructs a Tverberg point of depth $\lceil |Q|/\rho \rceil$ 
for $Q$ as well as a corresponding pruned partition in time $f(|Q|)$. 

Then, for any $n$-point set $P \subseteq \R^d$ and for any $d \geq \delta$, 
we can find a Tverberg point of depth $\lceil n/\rho^{\lceil d/\delta \rceil} \rceil$ and 
a corresponding pruned partition in time 
$\lceil d/\delta \rceil f(n) + d^{O(1)}n$.
\end{lem}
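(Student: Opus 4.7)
The plan is to induct on $k \eqdef \lceil d/\delta \rceil$, using the generalized lifting lemma (Lemma~\ref{lem:liftingExtended}) to peel off exactly $\delta$ dimensions per level of recursion and using $\mathcal{A}$ on the remaining $\delta$-dimensional slice at each level.

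For the base case $k = 1$ (so $d \leq \delta$), I would embed $P$ into $\R^\delta$ by appending zero coordinates, run $\mathcal{A}$, and observe that since $P$ lies in a $d$-dimensional affine subspace, so does every convex combination of its points and thus the output Tverberg point; a final pass of Lemma~\ref{lem:pruning} on each part (each of size at most $\delta+1=O(1)$) produces a pruned partition in $d^{O(1)} n$ time and depth $\lceil n/\rho \rceil = \lceil n/\rho^k \rceil$ as required.

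For the inductive step, I would fix a $(d-\delta)$-flat $h\subset\R^d$ (e.g., $\{x_{d-\delta+1}=\cdots=x_d=0\}$), project $P$ orthogonally to $\pr(P)\subset h$, and invoke the inductive hypothesis on $\pr(P)$ in dimension $d-\delta$ to obtain a Tverberg point $c'\in h$ of depth $r=\lceil n/\rho^{k-1}\rceil$ and a pruned partition $P_1,\dots,P_r$ in time $(k-1)f(n)+d^{O(1)}n$. By Lemma~\ref{lem:liftingExtended}, the orthogonal complement $h_{c'}^{\perp}$ is a Tverberg $\delta$-flat of depth $r$ for $P$ with the same partition. As in the proof of Lemma~\ref{lem:lifting}, for each $P_i$ I would compute $x_i \in \ch(P_i) \cap h_{c'}^{\perp}$ by applying the convex combination for $c'$ on $\pr(P_i)$ to the original points in $P_i$; since the partition is pruned, the total work here is $O(dn)$.

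Now the $r$ points $\{x_1,\dots,x_r\}$ live in the $\delta$-flat $h_{c'}^{\perp}$, which I identify with $\R^\delta$ and feed to $\mathcal{A}$: this yields, in time $f(r)\leq f(n)$, a Tverberg point $c$ of depth $\lceil r/\rho\rceil\geq \lceil n/\rho^k\rceil$ with a pruned partition of the $x_i$'s. Pulling this partition back to $P$ (grouping each $P_i$ according to the part containing $x_i$), the same chain of convex-hull inclusions used in Lemmas~\ref{lem:lifting} and~\ref{lem:combine} shows that $c$ lies in the convex hull of each group; since every group consists of at most $(\delta+1)(d-\delta+1)=O(d)$ points, a final application of Lemma~\ref{lem:pruning} per group yields a pruned partition in $d^{O(1)} n$ total time. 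Summing the recurrence $T(n,d)\leq T(n,d-\delta)+f(n)+d^{O(1)}n$ over $k$ levels gives the stated bound $\lceil d/\delta\rceil f(n)+d^{O(1)}n$, since $k \le d$. The main obstacle is the bookkeeping around dimensions: ensuring that the orthogonal complement of a $(d-\delta)$-flat is $\delta$-dimensional so that $\mathcal{A}$ applies, and that the stored convex combinations pass cleanly between the projection step and the lifting step so the lifted $x_i$ can be produced in linear time without incurring a $d$-factor blow-up per level.
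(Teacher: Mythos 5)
Your proof is correct and uses essentially the same strategy as the paper's: induction on $k = \lceil d/\delta\rceil$, Lemma~\ref{lem:liftingExtended} to pass from a Tverberg point of the projection to a Tverberg flat for $P$, intersection points of $\ch(P_i)$ with the orthogonal flat, and a final pruning pass. The only organizational difference is the order in which $\delta$ dimensions are peeled: the paper projects onto a $\delta$-flat, applies $\mathcal{A}$ there \emph{first}, and then recurses on the $\lceil n/\rho\rceil$ intersection points in the $(d-\delta)$-dimensional orthogonal flat, whereas you project onto a $(d-\delta)$-flat, recurse \emph{first}, and apply $\mathcal{A}$ to the resulting $\lceil n/\rho^{k-1}\rceil$ intersection points in the remaining $\delta$-flat---an inversion that changes nothing in the depth or time analysis and is just as clean.
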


\begin{proof}
Set $k \eqdef \lceil d/\delta \rceil$. 
We use induction on $k$ to show that such an algorithm exists
with running time $k(f(n) + d^{O(1)}n)$.  If $k = 1$, we can just 
use algorithm $\mathcal{A}$, and there is nothing to show.

Now suppose $k > 1$.  Let $h \subseteq \R^d$ be a $\delta$-flat in $\R^d$, 
and let $\pr(P)$ be the projection of $P$ onto $h$. We use algorithm
$\mathcal{A}$ to find a Tverberg point $c$ of depth $\lceil n/\rho \rceil$ for $\pr(P)$ 
as well as a corresponding pruned partition 
$\pr(P_1), \dots, \pr(P_{\lceil n/\rho \rceil})$. This takes time $f(n)$. By 
Lemma~\ref{lem:liftingExtended}, the $(d-\delta)$-flat $h^{\perp}_c$
is a Tverberg flat of depth $\lceil n/\rho \rceil$ for $P$, with corresponding pruned
partition $P_1, \dots, P_{\lceil n/\rho \rceil}$. For each $i$, we can thus
find a point $q_i$ in  $\ch(P_i) \cap h_c^{\perp}$ in
time $d^{O(1)}$.

Now consider the point set $Q = \{q_1, \dots, q_{\lceil n/\rho \rceil}\} 
\subseteq h_c^{\perp}$. The set $Q$ is $(d-\delta)$-dimensional. 
Since $\lceil (d-\delta)/\delta \rceil = k-1$,  we can inductively find
a Tverberg point $c'$  for $Q$ of depth $\lceil |Q|/\rho^{\lceil d/\delta \rceil - 1}\rceil 
\geq \lceil n/\rho^{\lceil d/\delta \rceil} \rceil$ and a
corresponding pruned partition 
$\mathcal{Q}$ in total time $(k-1)(f(n) + d^{O(1)} n)$.
Now, $c'$ is a Tverberg point of depth $n/\rho^{\lceil d/\delta \rceil}$ for
$P$: a corresponding Tverberg partition is obtained by replacing each
point $q_i$ in the partition $\mathcal{Q}$ by the corresponding subset
$P_i$. The resulting partition can be pruned in time $d^{O(1)}n$.
Thus, the total running time is 
\[
(k-1)(f(n) + d^{O(1)} n) + f(n) + d^{O(1)}n = k(f(n) + d^{O(1)}n),
\]
and since $k = O(d)$, the claim follows.
\qed\end{proof}

In the journal version of this paper, we claimed an example
application of 
Lemma~\ref{lem:lowDtoHighD}
to obtain a point of 
depth $n/4^{\lceil d/3 \rceil}$ 
in time $O(n \log n + d^{O(1)}n)$.
However, this example was based on an incorrect citation
and the conclusion does not follow as claimed. 

\subsection{An Improved Algorithm} 

Finally, we show how to combine the above techniques 
to obtain an algorithm with a better running time. 
The idea is as follows: using 
Lemma~\ref{lem:lowDtoHighD}, we can reduce one $d$-dimensional
instance to two instances of dimension $d/2$. We would like to proceed
recursively, but unfortunately, this
reduces the depth of the partition. To fix this,
we apply Lemmas~\ref{lem:collection}, \ref{lem:combine} and the 
Miller-Sheehy algorithm. 

\begin{thm}[Thm.~\ref{thm:bootstrap}, restated]
Let $P$ be a set of $n$ points in
$\R^d$. Then one can compute a Tverberg point of depth 
$\lceil n/4(d+1)^3 \rceil$ and a corresponding pruned partition in 
time $d^{O(\log d)} n$.  
\end{thm}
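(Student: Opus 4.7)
The plan is to recurse on the dimension, halving $d$ with each call, and to glue the pieces back together using Lemmas~\ref{lem:lowDtoHighD}, \ref{lem:collection}, \ref{lem:combine} together with the Miller--Sheehy algorithm as a subroutine. Let $A_d$ denote the algorithm we are building and let $R(d)$ be a function such that $A_d$ runs in time $R(d)\,n$ on any $n$-point set in $\R^d$. We will show by induction on $d$ that $R(d) = d^{O(\log d)}$; the base case $d = O(1)$ is handled by Theorem~\ref{thm:betterTverberg-withMS}, which runs in $O(n)$ time once $d$ is constant.

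For the inductive step, set $\delta = \lceil d/2 \rceil$ and $\rho = 4(\delta+1)^3$. By the inductive hypothesis $A_\delta$ achieves Tverberg depth $\lceil m/\rho\rceil$ on any $m$-point set in $\R^\delta$ in time $R(\delta)\,m$. Feeding $A_\delta$ into Lemma~\ref{lem:lowDtoHighD} with $\lceil d/\delta\rceil = 2$ yields an intermediate algorithm $B_d$ which, on an $n$-point set in $\R^d$, produces a Tverberg point of depth $\lceil n/\rho^2\rceil$ and a pruned partition in time $2R(\delta)\,n + d^{O(1)}n$. Here $\rho^2 = 16(\delta+1)^6 = O(d^6)$, so the depth delivered by $B_d$ is a factor $\Theta(d^3)$ worse than our target; we boost it back up by essentially repeating the bootstrap of Theorem~\ref{thm:betterTverberg-withMS}, with $B_d$ playing the role that Theorem~\ref{thm:simpleAlgorithm} played there.

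Concretely, invoke Lemma~\ref{lem:collection} with $B_d$ as the input algorithm and $\beta = 2$. This produces $\alpha = O(\rho^2/(d+1)) = O(d^5)$ pairwise disjoint subsets $Q_1,\dots,Q_\alpha$ of $P$, each with a Tverberg point $c_i$ of depth $\delta^* = \lceil n/(2\rho^2)\rceil$ and a pruned partition, in time $O(\rho^2/(d+1))$ times the running time of $B_d$, i.e., $d^{O(1)}R(\delta)\,n + d^{O(1)}n$. Next, run the Miller--Sheehy algorithm on $C = \{c_1,\dots,c_\alpha\}\subseteq \R^d$; since $|C|=O(d^5)$, it takes time $|C|^{O(\log d)} = d^{O(\log d)}$ and returns a Tverberg point of depth $\lceil \alpha/2(d+1)^2\rceil$ for $C$ together with a pruned partition. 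Finally, Lemma~\ref{lem:combine} stitches everything together in time $d^{O(1)}n$, producing a Tverberg point of depth $\delta^*\cdot\lceil \alpha/2(d+1)^2\rceil$ for $\bigcup_i Q_i$; any leftover points in $P\setminus\bigcup_i Q_i$ are dumped into a single arbitrary part without reducing the depth. Using $\alpha \geq (n/2)/(\delta^*(d+1))$ and the ceiling manipulations from the proof of Theorem~\ref{thm:betterTverberg-withMS} gives
\[
\delta^*\cdot\Bigl\lceil \alpha/(2(d+1)^2)\Bigr\rceil \;\geq\; \Bigl\lceil \delta^*\alpha/(2(d+1)^2)\Bigr\rceil \;\geq\; \Bigl\lceil n/(4(d+1)^3)\Bigr\rceil,
\]
which is the promised depth.

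Summing up, the running time satisfies $R(d)\,n \leq d^{O(1)} R(\lceil d/2\rceil)\,n + d^{O(\log d)} + d^{O(1)} n$. Unrolling through the $O(\log d)$ halving steps accumulates a factor of $d^{O(1)}$ per level, so $R(d) = d^{O(\log d)}$. The main thing to check carefully is the bookkeeping: one must verify that composing Lemma~\ref{lem:lowDtoHighD} with a Theorem~\ref{thm:betterTverberg-withMS}-style bootstrap really retains the depth $\lceil n/4(d+1)^3\rceil$ in spite of the two layers of ceilings, and that the Miller--Sheehy call is now applied to only $O(d^5)$ points instead of $2^{O(d)}$ points, which is precisely what drops the overall $d$-dependence from $2^{O(d\log d)}$ (as in Theorem~\ref{thm:betterTverberg-withMS}) down to $d^{O(\log d)}$.
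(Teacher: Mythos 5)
Your proof is correct and follows essentially the same route as the paper: induct on $d$ with dimension halving via Lemma~\ref{lem:lowDtoHighD} (with $\delta = \lceil d/2\rceil$), then recover the lost depth by combining Lemma~\ref{lem:collection}, a Miller--Sheehy call on the $d^{O(1)}$-size set $C$, and Lemma~\ref{lem:combine}, with the ceiling arithmetic matching the paper's. The only cosmetic differences are that you anchor the recursion at constant $d$ via Theorem~\ref{thm:betterTverberg-withMS} whereas the paper anchors at $d=1$ (median), and you unroll the recurrence informally while the paper verifies $T(n,d)\leq d^{\alpha\log d}n$ directly by induction.
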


\begin{proof}
We prove the theorem by induction on $d$.
As usual, for $d=1$ the problem reduces to median computation, 
and the result is immediate. 

Now let $d \geq 2$. 
By induction, for any at most $\lceil d/2 \rceil$-dimensional point set 
$Q \subseteq \R^{\lceil d/2 \rceil}$
there exists an algorithm that returns a Tverberg point of depth
$\lceil |Q|/4(\lceil d/2 \rceil + 1)^3 \rceil$ and a corresponding pruned 
partition in time $d^{\alpha \log \lceil d/2 \rceil} n$, for some sufficiently 
large constant $\alpha > 0$. 

Thus, by Lemma~\ref{lem:lowDtoHighD} (with $\delta = \lceil d/2 \rceil$), 
there exists an algorithm that can compute a Tverberg point for $P$ of 
depth $\lceil n/16(\lceil d/2 \rceil + 1)^6 \rceil$ and a corresponding 
Tverberg partition in total time $2d^{\alpha \log \lceil d/2 \rceil } + d^{O(1)}n$.
Now we apply Lemma~\ref{lem:collection} with $c = 2$ and 
$\rho = 16(\lceil d/2 \rceil + 1)^6$. The lemma shows that
we can compute a set $C$ of  $\lceil 16 (\lceil d/2 \rceil + 1)^6 / (d+1) \rceil$ 
points of depth
$\delta = \lceil n/32(\lceil d/2 \rceil + 1)^6 \rceil$ and corresponding 
(disjoint) pruned partitions in time
$d^{\alpha \log \lceil d/2 \rceil + O(1)}n$.
Applying the Miller-Sheehy
algorithm, we can find a Tverberg point for $C$ of depth
$\lceil |C|/2(d+1)^2 \rceil$ and a corresponding pruned partition in time
$|C|^{O(\log d)}$. Now, Lemma~\ref{lem:combine} shows that in additional
$d^{O(1)}n$ time, we obtain a Tverberg point and a corresponding Tverberg
partition for $P$ of size
\[
  \left\lceil \frac{n}{2 \cdot 16 (\lceil d/2 \rceil + 1)^6} \right\rceil 
  \left\lceil \frac{ 16 (\lceil d/2 \rceil + 1)^6 }{ 2(d+1)^2 
  (d+1)} \right\rceil \geq \left\lceil \frac{n}{4(d+1)^3} \right\rceil, 
\] 
since $\lceil a \rceil \lceil b \rceil \geq \lceil ab\rceil$ for all
$a,b \geq 0$.

It remains to analyze the running time.
Adding the various terms, we obtain a time bound of
\[ 
T(n,d) = d^{\alpha \log \lceil d/2 \rceil +O(1)}n + |C|^{O(\log d)} + d^{O(1)}n.
\] 
Since $|C| = d^{O(1)}$, using 
$\log \lceil d/2 \rceil \leq \log (d/2) + \log (2\lceil d/2 \rceil / d) \leq
\log (d/2) +  \log (4/3)$, we get
\begin{align*}
T(n,d) & \leq  d^{\alpha \log \lceil d/2 \rceil +O(1)}n + d^{O(\log d)}n \\
	   & \leq  d^{\alpha \log d - \alpha/2}n + d^{\beta \log d}n,
\end{align*}
for $\alpha$ large enough and some $\beta > 0$, independent of $d$.
Hence,  for large enough $\alpha$ we have
\[ 
T(n,d) \leq d^{\alpha \log d}n = d^{O(\log d)}n,
\] 
as claimed.
This completes the proof.  
\qed\end{proof}

Thus, we can compute a polynomial approximation to a Tverberg point 
in time pseudopolynomial in $d$ and linear in $n$.

\section{Comparison to Miller-Sheehy}\label{sec:comparison}

In the table below, we give a more detailed comparison of our results to the 
Miller-Sheehy algorithm and its extensions. In Section~5.2 of their paper,
Miller and Sheehy describe a 
generalization of their approach that improves the running
time for small $d$ by 
computing higher order Tverberg points of depth $r$ 
by brute force. 
The approximation quality deteriorates by a factor of $r/2$. No 
exact bounds are given, but as far as we can tell, one can 
achieve a running time of $O(f(d) n^2)$ for fixed $d$ 
by setting the parameter $r = d + 1$, while losing a 
factor of $(d+1)/2$ in the approximation. 

Furthermore, even 
though it is not explicitly mentioned in their paper, we 
think that it is possible to also bootstrap the Miller-Sheehy algorithm 
(for a better running time in terms of $d$, while 
losing another factor of $(d+1)$ in the output). This is done
by performing the generalized procedure~\cite[Section~5.2]{MillerSh10} 
with $r = d+1$, but using the original Miller-Sheehy algorithm instead
of the brute-force algorithm.
Table~\ref{tab:summary} shows a rough comparison (ceilings omitted)
of the different approaches. Again, $f$ denotes the 
running time of the brute force algorithm.

\begin{center}
\begin{tabular}{| l | c | c |} \hline
  \textbf{Algorithm}	& \textbf{Running time} & \textbf{Depth} 
\tabularnewline\hline
  Theorem~\ref{thm:simpleAlgorithm} & $O(n)$ & $n/2^d$ 
\tabularnewline \hline  
  Miller-Sheehy 		    & $n^{O(\log d)}$ & $n/2(d+1)^2$ 
\tabularnewline\hline
  Theorem~\ref{thm:betterTverberg}  & $O\left( f(2^d) + d^{O(1)}n\right)$ & $n/2(d+1)^2$ 
\tabularnewline \hline
  Miller-Sheehy generalized ($r = d + 1$) & $O\left( f(d) n^2\right)$ & $\approx n/(d+1)^3$ 
\tabularnewline \hline
  Theorem~\ref{thm:betterTverberg-withMS} & $O\left(2^{O(d\log d)} + n\right)$ & $n/4(d+1)^3$
\tabularnewline \hline
  Miller-Sheehy bootstrapped	& $d^{O(\log d)} n^3$ & $\approx n/2(d+1)^4$ 
\tabularnewline \hline
  Theorem~\ref{thm:bootstrap}	& $d^{O(\log d)} n$ & $n/4(d+1)^3$ 
  \tabularnewline\hline
\end{tabular}\label{tab:summary}
\end{center}

We should emphasize that for all dimensions $d$ with $2^d \leq 2(d+1)^2$, 
i.e., $d \leq 7$, our simplest algorithm outperforms every other
approximation algorithm in both running time and approximation ratio. For 
example, it gives a $1/2$-approximate Tverberg point in $3$ dimensions in 
linear time. 

\section{Conclusion and Outlook}\label{sec:outlook}
We have presented a simple algorithm for finding an approximate 
Tverberg point. It runs in linear time for any fixed dimension. 
Using more sophisticated tools and combining our methods with known results, 
we managed to improve the running time to
$d^{O(\log d)} n$, while getting within a factor of
$1/4(d+1)^2$ of the bound from Tverberg's theorem.
Unfortunately, the resulting running time remains quasi\-polynomial in
$d$, and we still do not know whether there exists a polynomial algorithm
(in $n$ and $d$) for finding an approximate Tverberg point of linear depth.

However, we are hopeful that our techniques constitute a further step 
towards a truly polynomial time algorithm and that such an algorithm will eventually be 
discovered---maybe even by a more clever combination of 
our algorithm with that of Miller and Sheehy.
An alternative promising approach, suggested
to us by Don Sheehy, derives from a beautiful proof of Tverberg's 
theorem. It is due to Sarkaria and can be found in Matousek's
book~\cite[Chapter 8]{Mat02}. It uses the 
colorful Carath\'eodory theorem:
\begin{thm} [Colorful Carath\'eodory] 
Let $C_1 \uplus \dots \uplus C_{d+1} \subseteq \R^d$,
such that for $i = 1, \dots, d+1$, we have
$0 \in \ch(C_i)$. Then there is a set $C$ 
of $d+1$ points with $0 \in \ch(C)$ and $|C_i \cap C| = 1$.
\end{thm}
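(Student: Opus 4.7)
The plan is to give Bárány's classical proof via a closest-point argument. Since there are only finitely many \emph{colorful transversals} $T = \{c_1, \dots, c_{d+1}\}$ with $c_i \in C_i$, I can choose one whose convex hull $\ch(T)$ minimizes the Euclidean distance to the origin; call this minimum distance $\mu$ and let $p$ be the (unique) nearest point of $\ch(T)$ to $\origin$. The goal is to show $\mu = 0$, which gives the desired set $C = T$. I will argue by contradiction, assuming $\mu > 0$, so that $p \neq \origin$.

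The next step is to produce the key separating hyperplane $H \eqdef \{x \in \R^d : \langle x, p\rangle = \|p\|^2\}$. Since $p$ is the foot of perpendicular from $\origin$ to the convex set $\ch(T)$, the standard variational characterization forces $\ch(T) \subseteq \{x : \langle x, p\rangle \geq \|p\|^2\}$, while $\origin$ lies strictly on the opposite side, $\langle \origin, p\rangle = 0 < \|p\|^2$. Now invoke Carathéodory's theorem (Theorem~\ref{thm:carat}) inside the boundary face of $\ch(T)$ that contains $p$: because $p$ is a boundary point of a $d$-simplex, it lies in a face of dimension at most $d-1$ and hence is a convex combination of at most $d$ of the vertices of $T$. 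Consequently, at least one color class, say $C_j$, contributes no vertex to the combination expressing $p$, i.e.\ $c_j$ is unused.

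Now I exploit the hypothesis $\origin \in \ch(C_j)$. If every point of $C_j$ lay in the closed halfspace $\{\langle x, p\rangle \geq \|p\|^2\}$, then so would $\origin$, contradicting $\langle \origin, p\rangle = 0 < \|p\|^2$. Hence there exists $c_j' \in C_j$ with $\langle c_j', p\rangle < \|p\|^2$. Form the new colorful transversal $T' \eqdef (T \setminus \{c_j\}) \cup \{c_j'\}$. Because $c_j$ was unused in expressing $p$, we still have $p \in \ch(T \setminus \{c_j\}) \subseteq \ch(T')$, and of course $c_j' \in \ch(T')$, so the segment $[p, c_j']$ lies in $\ch(T')$. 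Differentiating $f(t) \eqdef \|p + t(c_j' - p)\|^2$ at $t = 0$ yields $f'(0) = 2(\langle c_j', p\rangle - \|p\|^2) < 0$, so some point of $\ch(T')$ is strictly closer to $\origin$ than $p$, contradicting the minimality of $T$. Therefore $\mu = 0$ and $C \eqdef T$ is the desired colorful set.

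The step I expect to require the most care is the Carathéodory argument that guarantees some color class $C_j$ is unused in the representation of $p$: one has to argue that $p$, being the nearest point of a $d$-simplex to an exterior origin, lies on the relative boundary, and hence is a convex combination of at most $d$ of the $d+1$ vertices. The other ingredients (separating hyperplane, swap-and-improve) are then straightforward first-year convex geometry.
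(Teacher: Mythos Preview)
Your proof is correct; it is precisely B\'ar\'any's classical closest-point argument. The paper itself does not give a proof of this theorem---it is stated in the outlook section as a known result---but the paragraph immediately following the statement sketches the very same swap-and-improve idea in its algorithmic form (``take an arbitrary colorful simplex; if the origin is not contained in it, delete the farthest color and replace it by a point of that color that brings the simplex closer to the origin''). Your minimality formulation is the standard way to turn that iterative procedure into an existence proof, so you are in full agreement with what the paper alludes to.

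One small remark on the step you flagged: your phrasing assumes $\ch(T)$ is a genuine $d$-simplex, but this need not hold if the $d+1$ chosen points are affinely dependent. The conclusion is still fine, since in that degenerate case Carath\'eodory's theorem applied in the affine hull of $T$ (whose dimension is at most $d-1$) already gives $p$ as a convex combination of at most $d$ vertices. Either way some color is unused, and the rest of the argument goes through unchanged.
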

Sarkaria's proof transforms a $d$-dimensional instance of 
$n$ points of the Tverberg point problem to a Colorful Carath\'eodory 
problem in approximately $dn$ dimensions.

The question now is whether such a colorful simplex can 
be found in time polynomial in both $d$ and $n$. This 
would lead to a polynomial time algorithm for computing a 
Tverberg point. Observe that this would not contradict any 
complexity theoretic assumptions: an algorithm that \emph{finds}
such a point does not necessarily have to decide whether a given 
point indeed is a Tverberg point.

The simplest proof of Colorful Carath\'eodory leads directly 
to an algorithm for finding such a colorful simplex and works as 
follows: take an arbitrary colorful simplex. If the origin is not contained 
in it, delete the farthest color and take a point of that color that 
together with the other points induces a simplex that is closer to the
origin. It is unknown whether this procedure runs in polynomial time 
for both $d$ and $n$. Settling this question would constitute major
progress on the problem (see~\cite{MeunierDeza11,Rong12} for work in this direction).

Yet another approach would be to relax Sarkaria's proof and 
to try to formulate it as an approximation problem, which might be
easier to solve. However, it is not clear how to state such an approximation
to the Colorful Carath\'eodory problem in a way that leads to an approximate
Tverberg point. Perhaps via such a method, our algorithms 
can be improved further.

It is known that the problem of deciding whether a given point 
has at least a certain depth is NP-complete~\cite{Teng92}. It is possible to 
strengthen this result to show that in $\R^{d+1}$, the problem is 
$d$-\textsc{Sum} hard, using the approach by Knauer et al.~\cite{KTW11}. 
However, this does not tell us anything about the actual problem of 
computing a point of depth $n/(d+1)$. Such a point is guaranteed 
to exist, so it is not clear how to prove the problem  hard using 
``standard'' NP-completeness theory. Rather, we think that a 
hardness proof along the lines of complexity classes such as 
PPAD or PLS (see Papadimitriou~\cite{PPAD94}) should be pursued.

Finally, a common issue with Tverberg point (and centerpoint) algorithms in high 
dimensions, also pointed out by Clarkson~\etal~\cite{ClarksonEpMiStTe96}, is 
that the coefficients arising during the algorithm might become exponentially 
large.  While this is not a problem in our uniform cost model, for implementations 
of the algorithm it seems necessary to bound these. In particular, it would be
interesting to investigate the bit complexity of the intermediate solutions 
arising during the pruning process. As an alternative approach, one might try 
to perturb the points in the process, thereby lowering the precision of the 
coefficients. Additionally, one might have to introduce a notion of 
\emph{almost approximate Tverberg points}, where the point that is returned 
does not have to lie \emph{inside} all sets, but only \emph{close to} them.

\vskip0.9cm
\noindent\textbf{Acknowledgments.} We would like to thank
Nabil Mustafa for suggesting the problem to us. We also thank 
him and Don Sheehy for helpful
discussions and insightful suggestions. 

We would further like to thank the anonymous referees 
for their helpful and detailed comments.

\bibliographystyle{abbrv} 
\bibliography{tverberg}

\newcommand{\SortNoop}[1]{}
\begin{thebibliography}{10}

\bibitem{AgarwalShWe09}
P.~K. Agarwal, M.~Sharir, and E.~Welzl.
\newblock Algorithms for center and {T}verberg points.
\newblock {\em ACM Trans. Algorithms}, 5(1):Art. 5, 20 pp., 2009.

\bibitem{Chan04}
T.~M. Chan.
\newblock An optimal randomized algorithm for maximum {T}ukey depth.
\newblock In {\em Proc. 15th Annu. ACM-SIAM Sympos. Discrete Algorithms
  (SODA)}, pages 430--436, 2004.

\bibitem{Chazelle00}
B.~Chazelle.
\newblock {\em The discrepancy method: randomness and complexity}.
\newblock Cambridge University Press, Cambridge, 2000.

\bibitem{ClarksonEpMiStTe96}
K.~L. Clarkson, D.~Eppstein, G.~L. Miller, C.~Sturtivant, and S.-H. Teng.
\newblock Approximating center points with iterated {R}adon points.
\newblock {\em Internat. J. Comput. Geom. Appl.}, 6(3):357--377, 1996.

\bibitem{CormenLeRiSt09}
T.~H. Cormen, C.~E. Leiserson, R.~L. Rivest, and C.~Stein.
\newblock {\em Introduction to algorithms}.
\newblock MIT Press, Cambridge, MA, third edition, 2009.

\bibitem{DanzerGrKl63}
L.~Danzer, B.~Gr{\"u}nbaum, and V.~Klee.
\newblock Helly's theorem and its relatives.
\newblock In {\em Proc. {S}ympos. {P}ure {M}ath., {V}ol. {VII}}, pages
  101--180. Amer. Math. Soc., Providence, R.I., 1963.

\bibitem{Edelsbrunner87}
H.~Edelsbrunner.
\newblock {\em Algorithms in combinatorial geometry}.
\newblock Springer-Verlag, Berlin, 1987.

\bibitem{GroetschelLoSc93}
M.~Gr{\"o}tschel, L.~Lov{\'a}sz, and A.~Schrijver.
\newblock {\em Geometric algorithms and combinatorial optimization}, volume~2
  of {\em Algorithms and Combinatorics}.
\newblock Springer-Verlag, Berlin, second edition, 1993.

\bibitem{JadhavMu94}
S.~Jadhav and A.~Mukhopadhyay.
\newblock Computing a centerpoint of a finite planar set of points in linear
  time.
\newblock {\em Discrete Comput. Geom.}, 12(3):291--312, 1994.

\bibitem{KTW11}
C.~Knauer, H.~R. Tiwary, and D.~Werner.
\newblock {On the computational complexity of Ham-Sandwich cuts, Helly sets,
  and related problems}.
\newblock In {\em 28th International Symposium on Theoretical Aspects of
  Computer Science (STACS 2011)}, volume~9, pages 649--660. Schloss
  Dagstuhl--Leibniz-Zentrum fuer Informatik, 2011.

\bibitem{Mat02}
J.~Matou\v{s}ek.
\newblock {\em {Lectures on Discrete Geometry}}.
\newblock Springer, 2002.

\bibitem{MeunierDeza11}
F.~Meunier and A.~Deza.
\newblock A further generalization of the colourful {C}arath{\'e}odory theorem.
\newblock \texttt{arXiv:1107.3380}, 2011.

\bibitem{MillerSh10}
G.~L. Miller and D.~R. Sheehy.
\newblock Approximate centerpoints with proofs.
\newblock {\em Comput. Geom. Theory Appl.}, 43(8):647--654, 2010.

\bibitem{PPAD94}
C.~H. Papadimitriou.
\newblock On the complexity of the parity argument and other inefficient proofs
  of existence.
\newblock {\em Journal of Computer and System Sciences}, 48(3):498 -- 532,
  1994.

\bibitem{Rado46}
R.~Rado.
\newblock A theorem on general measure.
\newblock {\em J. London Math. Soc.}, 21:291--300, 1946.

\bibitem{Rong12}
G.~Rong.
\newblock On algorithms for the colourful linear programming feasibility
  problem.
\newblock Master's thesis, McMaster University, 2012.

\bibitem{Teng92}
S.-H. Teng.
\newblock {\em Points, spheres, and separators: a unified geometric approach to
  graph partitioning}.
\newblock PhD thesis, School of Computer Science, Carnegie Mellon University,
  1992.

\bibitem{Tverberg66}
H.~Tverberg.
\newblock A generalization of {R}adon's theorem.
\newblock {\em J. London Math. Soc.}, 41:123--128, 1966.

\end{thebibliography}

\end{document}